\newtheorem{theorem}{Theorem}[section]
\newtheorem{proposition}[theorem]{Proposition}
\newenvironment{proof}[1][Proof]{\begin{trivlist}
\item[\hskip \labelsep {\bfseries #1}]}{\end{trivlist}}
\newenvironment{definition}[1][Definition]{\begin{trivlist}
\item[\hskip \labelsep {\bfseries #1}]}{\end{trivlist}}
\newenvironment{remark}[1][Remark]{\begin{trivlist}
\item[\hskip \labelsep {\bfseries #1}]}{\end{trivlist}}
\def\ps@headings{%
\def\@oddhead{\mbox{}\scriptsize\rightmark \hfil \thepage}%
\def\@evenhead{\scriptsize\thepage \hfil \leftmark\mbox{}}%
\def\@oddfoot{}%
\def\@evenfoot{}}
\begin{document}
\title{Distributed Association Control and Relaying in\\ MillimeterWave Wireless Access Networks\vspace{-0.5ex}}
\author{\IEEEauthorblockN{Yuzhe Xu, George Athanasiou, Carlo Fischione}
\IEEEauthorblockA{
KTH Royal Institute of Technology, 
Stockholm, Sweden\\
Email: \{yuzhe, georgioa, carlofi\}@kth.se\vspace{-4ex}}
\and
\IEEEauthorblockN{Leandros Tassiulas}
\IEEEauthorblockA{
University of Thessaly, Volos, Greece\\
E-mail: leandros@uth.gr\vspace{-4ex}}}
\maketitle
\begin{abstract} 
In millimeterWave wireless networks the rapidly varying wireless channels demand fast and dynamic resource allocation mechanisms. This challenge is hereby addressed by a distributed approach that optimally solves the fundamental resource allocation problem of joint client association and relaying. The problem is posed as a multi-assignment optimization, for which a novel solution method is established by a series of transformations that lead to a tractable minimum cost flow problem. The method allows to design distributed auction solution algorithms where the clients and relays act asynchronously. The computational complexity of the new algorithms is much better than centralized general-purpose solvers. It is shown that the algorithms always converge to a solution that maximizes the total network throughput within a desired bound. Both theoretical
and numerical results evince numerous useful properties
in comparison to standard approaches and
the potential applications to forthcoming millimeterWave wireless
access networks.
\end{abstract}

\section{Introduction}\vspace{-0.05in}
MillimeterWave~(mmW) communications in the 60 GHz frequency band have recently attracted the interest of academia, industry, and standardization bodies due to the low-cost mmW radio frequency integrated circuit design~\cite{Giannetti1999, Smulders2007, Doan2004}. Unlicensed continuous spectrum of 7~GHz is available in many countries worldwide~\cite{Rappaport2011}. Therefore, mmW systems became attractive for gigabit wireless applications such as large file transfer, wireless docking stations, wireless gigabit ethernet, uncompressed high definition video transmission. MmW radio access technology can enable dense and small-cells~\cite{Hoydis11} and mobile data offloading~\cite{Lee12} scenarios, which are nowadays strongly motivated by the increased end-user connectivity requirements and mobile traffic.

MillimeterWave communications can provide data rates in the range of several Gbps~\cite{Daniels10}. In contrast to the existing radio frequency technologies, such as the 2.4~GHz or 5~GHz band (WiFi), the interference levels for mmW networks are low~\cite{Rappaport2011}. The compact design of mmW radio allows the use of multiple-antenna solutions at the user device. However, several challenges arise when mmW technology is applied to support the user demands. First, the severe channel attenuation and the high path loss require high transmission power levels. The oxygen absorption loss is in the range of 5-30~dB/km and the penetration loss can be too high through typical building's materials~\cite{Daniels10}. As a consequence, line-of-sight (LOS) is required in most of the cases, while signals along the second order (reflections, etc) are highly attenuated and too weak. Therefore, service disruptions are possible due to non-line-of-sight (NLOS) or blockage effects~\cite{Singh11}. The exploitation of these unique characteristics is essential for efficient resource allocation algorithms.   

%

In this paper we consider a typical wireless access problem, where each client has to be associated to one wireless access points (APs) and where relay clients cooperatively assist other clients in case of communication blockage, NLOS or very poor link performance. 
Such a problem is more challenging in mmW band than in traditional wireless networks since the wireless channel is unstable and several events hinder the efficient operation of the network, such as moving obstacles that block the communication~\cite{Singh07}. This demands the use of relay clients that may serve other clients in case that the direct link to AP is not efficient. Specifically, we consider a challenging joint client association and relaying problem, where the objective is to \emph{maximize the total throughput} that the clients achieve. 

The joint wireless access association and relaying problem is posed as a multi-assignment optimization whose challenge is the unavailability of efficient distributed solution methods. The use of existing centralized solvers would not be possible due to the fast variations of the wireless channels in mmW that cause dramatic block of the communication at shorter time scales than the time needed for the central solver to work. Therefore, we propose a novel solution method that converts the initial problem to a minimum cost flow problem, which allows us to design two efficient solvers by a combination of \emph{distributed auction algorithms}, one for static networks and one for dynamic networks. Our algorithms exploit the specific network optimization structure of the problem, and thus they are much more powerful than computationally intensive general-purpose solvers~\cite{Bertsekas1998} for multi-assignment problems. New fundamental theoretical results are established for the optimality, convergence, and the complexity  of the distributed algorithms. In addition, numerical results evince numerous properties in comparison to existing approaches.

The rest of paper is organized as follows. In Section~\ref{sec.related-work} we give a literature overview. In Section~\ref{sec.system-model}, we describe the system model and problem formulation. The solution method is presented in Section~\ref{sec.solution}. The distributed auction algorithms is introduced in Section~\ref{distr_algo}. Then, the numerical results are presented in Section~\ref{sec.numerical-examples}. Lastly, Section~\ref{sec.conclusion} concludes the paper.
\section{Related Work}\vspace{-0.05in}
\label{sec.related-work}

Resource allocation for wireless local area networks has been the focus of intense research. The performance analysis of the basic client association policy that IEEE 802.11 standard defines, based on the received signal strength indicator (RSSI), was performed in several studies. It was shown that this basic association policy can lead to inefficient use of the network resources~\cite{Bejerano2}, since RSSI is not an efficient decision metric for user association. High RSSI values cannot univocally indicate high throughput. 
Therefore, better client association policies have been proposed\cite{Athanasiou07, Athanasiou09, Shakkottai}. 

The current 60 GHz mmW standardization bodies, such as IEEE 802.11ad~\cite{802_11ad} and IEEE 802.15.3c~\cite{802_15_3c}, support the RSSI-based mechanism as the basic association/handoff functionality. The accuracy of the RSSI-based technique is significantly affected by the high path loss, dispersion, and directionality of the mmW wireless channel. New metrics that better reflect these characteristics are required. The existing approaches are excellent for 2.4, 5 GHz communication, but are unfortunately hard to apply in mmW networks due to the above mentioned characteristics. The mmW channel has significant differences compared to the rest of wireless access technologies~\cite{Qiao11, Singh11, Genc12, Lin12}, namely, severe channel attenuations, high path loss, directionality, and blockage. It follows that mmW networks demand novel mechanisms for resource allocation. Our previous approach \cite{Athanasiou-etal-2013} was the first to study the client association in 60 GHz mmW wireless access networks. However, the focus was on the network performance (achieving load balancing) and not on optimizing the total throughput of the clients. Moreover, we did not consider relaying clients, which substantially increases the difficulty of the problem. 

Client relaying allows to overcome performance anomalies in IEEE 802.11 wireless local area networks~\cite{Heusse05}. A typical relaying scenario considers intermediate clients to relay the traffic for low data rate clients \cite{Bahl09}. The mechanism that decides when a client's traffic should be relayed, and to whom, plays an essential role. Several approaches analyzed the relaying performance and proposed mechanisms that consider the estimated transmission rates, throughput, and delays in the communication with relay clients \cite{Bahl09, Narayanan07, Zhang10}. In mmW wireless access networks relaying is of greater usage than in traditional wireless access networks (such as  WiFi), since the wireless channel is unstable and several events can violate the efficient operation of the network, such as moving obstacles that can block the communication. Some recent approaches appeared in literature in mmW networks, as we survey below. 

In~\cite{Singh10}, a MAC protocol that employs memory and learning to address deafness, while exploiting the reduction of interference between simultaneous transmissions, is proposed. The directionality and blockage problems of mmW networks are studied in \cite{Singh07}. A cross-layer approach is presented, where a single hop transmission is preferred when line of sight (LOS) is available and a relay node is randomly selected as an alternative. In \cite{Cai2010-2}, a resource management mechanism is proposed based on the exclusive region (ER) to exploit the spatial reuse of mmW networks. In~\cite{Singh11}, the author propose an interference analysis framework that enables a quantitative evaluation of collision loss probability for a mmW mesh network with uncoordinated transmissions, as a function of the antenna patterns and spatial density of simultaneously transmitting nodes. Concurrent transmissions in 60 GHz wireless networks are studied in~\cite{Qiao11} by exploiting the spatial reuse and time division multiplexing gain. It is shown that the network throughput is improved compared to single hop transmission schemes. However, none of the previous approaches studied jointly the client association and relaying. We believe that these two problems are strongly interconnected and a joint solution is of great importance in mmW wireless networks. Moreover, there are no distributed and lightweight algorithms in literature that are able to efficiently solve these problems. 

This paper considers the special characteristics of the mmW channel in an optimization problem where the objective is to \emph{maximize the total throughput} that the clients achieve. To that goal, we design a combination of \emph{distributed auction algorithms} to jointly optimize the client association and relay selection processes. We believe that this paper is the first to study jointly the AP and relaying assignment, which are the two interconnected essential resource allocation problems in mmW wireless access networks. Our work is presented and evaluated in the forthcoming sections. 

\begin{figure}[t]\vspace{-0.3in}
  \centering
  \includegraphics[width=0.3\textwidth]{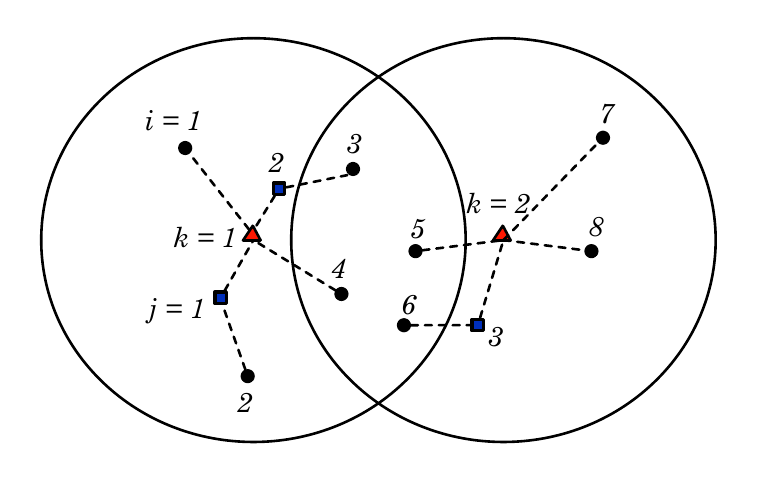}\vspace{-0.2in}
  \caption{Example of wireless access network with 8 clients, 3 relays, and 2 APs: $\mathcal M = \{1,\dots,8 \}$, $\mathcal N = \{1,2,3 \}$, and $\mathcal K = \{1 ,2\}$ are marked by circles, squares and triangles respectively. Consider the client 3,  we have $\mathcal N(i=3)=\{1,2,3\}$, $\mathcal K(i=3) = \{1,2\}$.  The dash lines represent the association. The pair client-relay-AP $(2,1,1)$ and client-AP $(1,1)$ are in assignment $\mathcal S$. }\vspace{-0.2in}
\label{fig network}
\end{figure}

\section{System Model And Problem Formulation}\vspace{-0.05in}
\label{sec.system-model}
Consider a mmW wireless network with $M$ clients and $K$ APs, where each client can be associated to one of the available APs. Each client may skip association with APs and establish a connection via one of the available $N\leq M$ relays, where a relay is a client that in addition to its transmissions help another client with an AP. We assume that every relay can serve only one client per time. We denote by $\mathcal M$, $\mathcal N$, and $\mathcal K$ the sets of clients, relays and APs respectively. The nonempty sets of the clients to which AP $k$, or the relay node $j$, can be connected are denoted by $\mathcal{M} (k)$ and $\mathcal{M} (j)$, respectively. Similarly, the nonempty sets of the APs by which client $i$ or relay node $j$ can be served are denoted by $\mathcal{K}(i)$ and $\mathcal{K} (j)$, while such sets of the relays are denoted by $\mathcal{N} (i)$ and $\mathcal{N} (k)$, respectively. $\mathcal A$ is the set of all possible client-AP pairs $(i,j)$ with $i\in \mathcal M(k)$ and client-relay-AP pairs $(i,j,k)$ with $i\in\mathcal M (j)$, $j\in \mathcal N k$. A feasible assignment $\mathcal{S}$ is defined as a subset of $\mathcal A$, where a relay can belong to one client-relay-AP pair, and a client must belong to either a client-relay-AP or a client-AP pair. Lastly,  assume the APs are placed uniformly, and the clients and relays are distributed uniformly at random inside the range of the APs. An example of this network is shown in Fig.~\ref{fig network}.

Each client, relay, and AP is equipped with steerable directional antennas and it can direct its beams to transmit or to receive~\cite{Cai2010-2}.
Therefore, every AP $k$ can support all the clients in $\mathcal{K} (i)$ and the relays in $\mathcal{K} (j)$ with a separate transmit beam. This basically gives a broadband interference and an additive white Gaussian noise channel. By the Friis transmission equation the received signal power can be calculated at distance $d$ between the transmitter and the receiver. It follows that the maximum achievable data rate at distance $d$ is
\begin{equation}\vspace{-0.07in}
\label{ch-capacity-in-d}
    R(d) = W \log_2 \left( 1 + \frac{P_{\rm T} G_{\rm R} G_{\rm T} \lambda^2 d_0^\eta}{16\pi^2(N_0 + I)W d^\eta}  \right)\,,
\end{equation}
where $P_{\rm T}$ is the transmitted power, and $G_{\rm T}$ and $G_{\rm R}$ are respectively the antenna gains of the transmitter and receiver. $\lambda$ is the wavelength, $d_0$ is the far field reference distance, and $\eta$ is the path loss exponent (typically $\eta \in [2,6]$ in IEEE 802.11ad networks~\cite{Suiyan2009, 802_11ad}). $W$ is the system bandwidth, $N_0$ is the power spectral of the noise, and $I$  is the broadband interference. The well studied 60~GHz propagation characteristics~\cite{Mudumbai2009}, such as highly directional transmissions with very narrow beamwidths and increased path losses due to the oxygen absorption, allow us to assume that the communication interference $I$ is very small and does not affect significantly the achievable communication rates.\footnote{In~\cite{Mudumbai2009}, a probabilistic analysis of the interference incurred in mmW networks has proven that even uncoordinated transmission for different transmit-receive pairs leads to small collision probabilities. Therefore, the links in the network can be considered as \emph{pseudo-wired}, and interference can essentially be ignored in MAC or higher layers design. Similar assumptions for mmW communications are also supported by using efficient scheduling algorithms for concurrent transmissions~\cite{Qiao11}.} 
Note that in mmW networks, the flow throughput decreases significantly over distance as modeled by Eq.~\eqref{ch-capacity-in-d}, indicating that short links, by the use of relays, may be preferred to achieve high transmission data rates.

When client $i$ is associated to AP $k$, we denote client $i$'s throughput benefit as $a_{(i,k)}$, whereas when client $i$ is associated to AP $k$ by an intermediate relay client $j$, we denote the client $i$'s throughput benefit as $a_{(i,j,k)}$. We let the value of the throughput benefit be $a_{(i,k)} = R(d_{ik})\,$, and $a_{(i,j,k)} = \min \left\{ R(d_{ij}),R(d_{jk})\right\}\,$. 
It follows that the link rate is bounded by the lowest rate when using a relay. 
To describe the client-AP or client-relay-AP association, we introduce the binary decision variables $x_{(i,k)}=1$ if $(i,k)\in \mathcal S$ and $x_{(i,k)}=0$ otherwise, for all $(i,k)\in \mathcal A$. Moreover, $x_{(i,j,k)}=1$ if $(i,j,k)\in\mathcal S$ and $x_{(i,j,k)}=0$ otherwise, for all $(i,j,k)\in \mathcal A$. Naturally, the total throughput benefit of the clients in the network is given by\vspace{-0.05in}
\begin{equation}\vspace{-0.05in}
\label{throughput}
    u =\sum_{(i,k)\in \mathcal A} a_{(i,k)} x_{(i,k)} + \sum_{(i,j,k)\in \mathcal A } a_{(i,j,k)} x_{(i,j,k)\,,} 
\end{equation}
Our goal is to find the optimal assignment $\mathcal S^*$ that maximizes $u$. This resource allocation problem can be formulated into a multi-dimensional assignment problem as
\begingroup
\thickmuskip=0.2\thickmuskip
\begin{subequations}\vspace{-0.06in}
    \label{multidim-assign-problem}
    \begin{align}
        \underset{x_{(i,k)}, \:\: x_{(i,j,k)}}{\textrm{maximize}}  &\ u \\ \label{client_constraint}
 {\rm s.t.}          &\sum_{(i,k)\in \mathcal A} x_{(i,k)} + \sum_{(i,j,k)\in\mathcal A} x_{(i,j,k)} =1, \  \forall i \in \mathcal M\,,\\
                        \label{relay_constraint}
                        & \sum_{(i,j,k) \in \mathcal A} x_{(i,j,k)} \leq 1, \ \forall j\in \mathcal N \,,\\
			\label{variable_constraint}
                        & x_{(i,j,k)}, x_{(i,k)} = \{ 0,1\}, \forall (i,j), (i,j,k) \in \mathcal A \,,
    \end{align}
\end{subequations}
\endgroup
where the known parameters are $a_{(i,k)}$ and $a_{(i,j,k)}$. Constraint~\eqref{client_constraint} assures that client $i$ is associated to one AP or connected to one relay. Constraint~\eqref{relay_constraint} ensures that relay $j$ can assist one client at most. Constraint~\eqref{variable_constraint} indicates that the decision variables are binary. Due to the limited available connections of the APs we must keep the link balance for APs (in terms of the number of connections) over the time. This introduces a time dimension and the consecutive solutions of problem~\eqref{multidim-assign-problem} must also satisfy the following constraint: 
\begin{equation}
\label{load_balance_contraint}
    \mathbf{E} \left[ \sum_{(i,k)\in \mathcal A} x_{(i,k)} + \sum_{(i,j,k)\in \mathcal A} x_{(i,j,k)}  \right] = \frac{M}{K} \,,\ \forall k\in \mathcal K\,,
\end{equation}
where the expectation is taken with respect to the random distribution of clients and relays over the time. The randomness is hidden in set $\mathcal A$. 

Problem~\eqref{multidim-assign-problem} is a special case of multi-assignment problems that in general have no closed form solutions and are NP-complete~\cite{Horst-Pardalos-Toai-00}. Moreover, the problem is combinatorial, and may have multiple optimal solutions. We have to rely on exponentially complex global methods~\cite{Horst-Pardalos-Toai-00} to solve it, unless new methods are developed. The computational cost for directly solving problem~\eqref{multidim-assign-problem} by searching is very high, since the number of possible combinations is $\mathcal{O}(M N! K^{M})$ when $M > N$. In what follows we propose a novel distributed solution approach based on auction algorithms.

\section{Centralized Solution Method}\vspace{-0.05in}
\label{sec.solution}

In this section, we present a novel solution method for problem~\eqref{multidim-assign-problem} in three steps: (a) We propose a set of intermediate transformations with some virtual entities; (b) We transfer problem~\eqref{multidim-assign-problem} into an equivalent asymmetric assignment problem; (c) Finally, we covert that asymmetric assignment problem into a minimum flow problem that is solved by a centralized auction-based approach. We present the details in the sequel. 
\subsection{Transformation to clients-objects domain}\vspace{-0.05in}
\label{sec.algorithms}
For every client $i$ and relay $j$, we construct new nonempty sets of APs $\mathcal K^*(i)$ and $\mathcal K^* (j)$, such that
\begin{align*}
    \mathcal K^*(i) &= \left\{ k_i^* \bigg| k_i^* = \underset{k\in \mathcal M(k)}{\textrm{argmax}}\, R(d_{ik})\right\} \,,\\
    \mathcal K^*(j) &= \left\{ k_j^* \bigg| k_j^* = \underset{k\in \mathcal N(k)}{\textrm{argmax}}\, R(d_{jk})\right\} \,.
\end{align*}
Then we have the following important result: 
\begin{proposition}
\label{prop.uniform_connection}
Consider the clients, relays and APs are distributed uniformly at random. Any feasible assignments for problem~\eqref{multidim-assign-problem}, using sets $\mathcal K^*(i)$ and $\mathcal K^*(j)$ instead of $\mathcal K(i)$ and  $\mathcal K(j)$, ensure constraint~\eqref{load_balance_contraint}.
\end{proposition}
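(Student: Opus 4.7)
The plan is to exploit two features of the modified sets. First, since $R(d)$ is strictly decreasing in $d$, both $\mathcal K^*(i)$ and $\mathcal K^*(j)$ pick out the unique closest AP to client $i$ and relay $j$, respectively. Second, the uniform placement of APs, together with the uniform random placement of clients and relays, induces a Voronoi-type partition of the coverage area with $K$ equal-measure cells. My strategy is to reduce the load-balance expectation to a sum of single-client events, each of which, by symmetry, has probability $1/K$.

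First, I would observe that under the restricted sets $\mathcal K^*(i)$ and $\mathcal K^*(j)$, any feasible assignment routes each client $i$ to exactly one AP: either $i$'s own nearest AP (when a direct pair $(i,k)$ is chosen with $k\in\mathcal K^*(i)$) or the nearest AP of the relay it uses (when $(i,j,k)$ is chosen with $k\in\mathcal K^*(j)$). For each fixed $k$, define the indicator
\begin{equation*}
Y_i^k \;=\; \sum_{(i,k)\in\mathcal A} x_{(i,k)} \;+\; \sum_{j:(i,j,k)\in\mathcal A} x_{(i,j,k)},
\end{equation*}
which equals $1$ iff client $i$'s effective AP is $k$. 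By constraint~\eqref{client_constraint}, $\sum_k Y_i^k=1$ for every $i$, and the left-hand side of~\eqref{load_balance_contraint} is precisely $\mathbf{E}\bigl[\sum_{i\in\mathcal M}Y_i^k\bigr]$.

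Next, I would argue $\Pr(Y_i^k=1)=1/K$. Since APs are placed uniformly, their Voronoi cells partition the coverage area into $K$ regions whose areas, by translational symmetry of the AP layout (ignoring boundary effects, or equivalently on a torus), are equal. A uniformly random point therefore falls in each cell with probability $1/K$. For a client $i$ connecting directly, $Y_i^k=1$ is equivalent to $i$ lying in the Voronoi cell of AP $k$, an event of probability $1/K$. For a client routed through relay $j$, $Y_i^k=1$ is equivalent to $j$ lying in that same cell, which also has probability $1/K$ by the uniformity of relay locations. Conditioning on which branch is chosen leaves the probability invariant, so $\Pr(Y_i^k=1)=1/K$. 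Linearity of expectation then gives $\mathbf{E}\bigl[\sum_{i}Y_i^k\bigr]=M/K$, establishing~\eqref{load_balance_contraint}.

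The main obstacle I anticipate is the rigorous handling of the symmetry step: in a bounded coverage region, the Voronoi cells of a uniform AP grid are not exactly equal near the boundary, and ``uniformly at random'' for clients and relays must be interpreted so that each cell receives probability $1/K$. I would resolve this either by assuming a wrap-around (toroidal) coverage area, or by invoking a symmetric AP layout (e.g., identical Voronoi cells) together with a uniform client/relay distribution on the union of those cells; with that convention, the equal-probability claim holds exactly, and the rest of the argument is a routine application of linearity of expectation.
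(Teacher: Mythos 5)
Your proposal is correct and follows essentially the same route as the paper's proof: decompose the load at AP $k$ per client, condition on whether the client connects directly or through a relay, use uniformity to conclude that each branch lands on AP $k$ with probability $1/K$, and apply linearity of expectation to obtain $M/K$. The only difference is that you make explicit the Voronoi/symmetry justification (and the boundary-effect caveat) that the paper leaves implicit in its assertions $\Pr(j,k)=1/K$ and $\Pr(i,k\mid i,\mathrm{AP})=1/K$.
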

\begin{proof}
At time $t$, let $\Pr (i, j)$ denote the probability of client $i$ being connected to relay $j$, which depends on both the positions of all the clients and relays, and the assignment strategy. Let $\Pr (j, k)$ denote the probability of relay $j$ being associated to AP $k$. Since the clients, relays, and APs are placed uniformly at random in this network, we have $\Pr (j, k)= 1/M$, and the conditional probability of client $i$ associated to AP $k$, $\Pr(i,k | i,{\rm AP}) = 1/K$, given there is no relay involving. Moreover, since all the clients are associated to either relays or APs under a feasible assignment of problem~\eqref{multidim-assign-problem}, $\sum_{j \in \mathcal  N} \Pr (i, j) + \Pr (i, \rm AP)= 1$. 
Thus the probability of client $i$ being associated to AP $k$ is
\begin{align*}
    \Pr (i, k)  =& \sum_{j\in\mathcal N} \Pr (i,j) \Pr (j, k) \\
    &+ \left(1 - \sum_{j\in\mathcal N} \Pr (i, j)\right) \Pr (i, k| i , {\rm AP} ) = \frac{1}{K}\,,
\end{align*}
which indicates that client $i$ and relay $j$ have the same probability $1/K$ to be connected to AP $k$. Thus the expectation number of clients and relays that connect to AP $k$ can be obtained as $M/K$, which completes the proof.
\end{proof}

Proposition~\ref{prop.uniform_connection} implies the link balancing for APs in the number of connections by using the reconstructed sets $\mathcal K^*(i)$ and $\mathcal K^* (j)$. In addition, for each client $i$ we define a new set $\mathcal Q(i)$ as
\begin{equation*}
    \mathcal Q(i) = \mathcal N^*(i)\bigcup \{N+i\}\,,
\end{equation*}
where
\begin{equation*}
	 \mathcal N^*(i) = \left\{ j \in \mathcal N(i):\min\{R (d_{ij}), R (d_{jk_j^*})\} > R(d_{ik_i^*}) \right\}\,,
\end{equation*}
and where $N+i$ is a virtual relay that is associated to AP $k_i^*$ as shown in Fig.~\ref{fig.example_transfer}. Moreover, we place virtual relay $N+i$ infinity close to client $i$.  Thus via relay $N+i$, the transmission rates from client $i$ to any APs are the same as those from client $i$ to APs directly. 

The elements in $\mathcal Q$ are named as ``objects''. To be consistent with auction theory terminology, in what follows the relays, APs, and objects are considered as same entities, and sometimes the term object is used for all of them. The set of clients that could be associated to object $q \in \mathcal Q$ is denoted by $\mathcal M(q)$. Note that relay $N+i$ is only accessible by client $i$, such that $\mathcal M (q=N+i) = \{i\}$.
\subsection{Asymmetric Assignment Problem}\vspace{-0.05in}
\label{prop.ass_assignment}
Using the previous derivations, problem~\eqref{multidim-assign-problem} can be transferred to a standard asymmetric assignment problem
\begin{subequations}
\label{asymmetric-assign-problem}
    \begin{align}
        \underset{y_{(i,q)}}{\textrm{maximize}}  & \quad \sum_{(i,q)\in\mathcal A^*} \beta_{(i,q)} y_{(i,q)} \\
        \label{asym_relay_constraint}
        {\rm s.t.}          & \quad \sum_{i\in \mathcal M (q)} y_{(i,q)} \leq 1 ,\ \forall q \in \mathcal Q(i) \,,\\
        \label{asym_client_constraint}
                            & \quad \sum_{q\in \mathcal Q(i)} y_{(i,q)} = 1,\ \forall i\in \mathcal M,\\
        \label{asym_variable_constraint}
                            & \quad y_{(i,q)} \geq 0 \,,\ \forall (i,q)\in\mathcal A^* \,,
    \end{align}
\end{subequations}
where set $\mathcal A^*$ contains all possible pairs, such that $i\in\mathcal M$ and $q\in \mathcal Q (i)$. The known parameter is $\beta_{(i,q)}$. It equals to $a_{(i,j,k_j^*)}$, if $q=1,\dots,N$, and $a_{(i,k_i^*)}$, otherwise. Constraint~\eqref{asym_relay_constraint} ensures that object $q$ can be assigned to one client at most. Constraint~\eqref{asym_client_constraint} ensures that client $i$ can be assigned to one and only one object. Object $q \in \{ N+1,\dots,N+M\}$ represents an AP, which are now represented as virtual objects and forced to be associated at most once. Constraint~\eqref{asym_variable_constraint} resulted from the relaxation of ~\eqref{variable_constraint}. This relaxation does not give a suboptimal solution because, based on the theory for assignment-like problems~\cite{Bertsekas1998} if problem~\eqref{asymmetric-assign-problem} has an optimal solution $y^*_{(i,q)}$, then $y^*_{(i,q)}=0$ or $1$, $\forall (i,q)\in\mathcal S^*$. Lastly, note that since for every client we can find at least a virtual relay, the number of objects is larger than the number of clients, from which the asymmetric assignment problem is obtained~\cite{Bertsekas1998}. The variables of problem~\eqref{multidim-assign-problem}  are recovered from those of~\eqref{asymmetric-assign-problem} by the following proposition. 

\begin{figure}\vspace{-0.3in}
\centering
\includegraphics[width=0.3\textwidth]{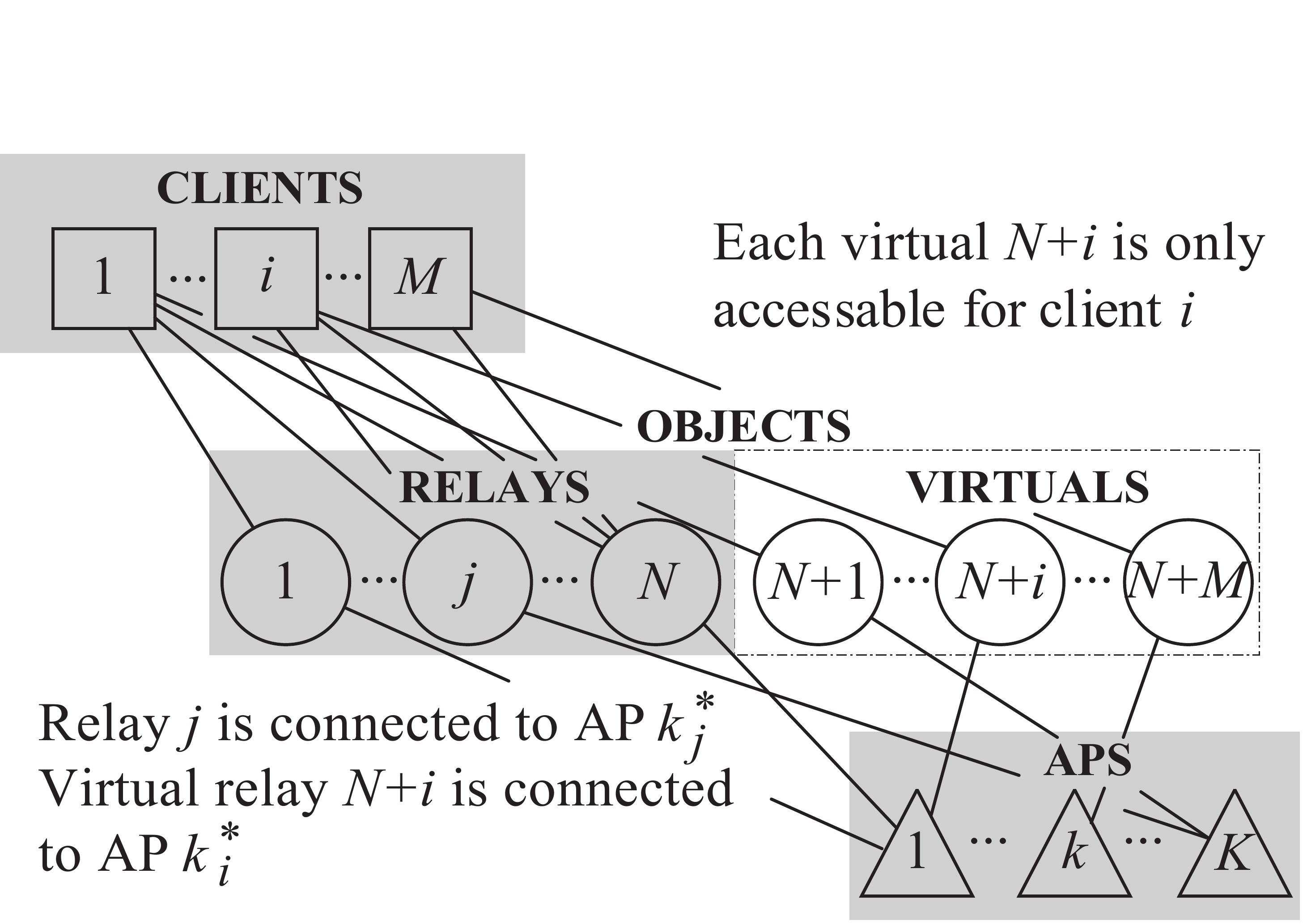}\vspace{-0.15in}
\caption{Illustration of the transformation for client-object set. Here $\mathcal N(i)=\{1,\dots,N\}$, and $\mathcal Q (i) = \{ 1, \dots, N, N+i\}$, where object $N+i$ represents the best AP for client $i$, $k_i^*$. }
\label{fig.example_transfer}\vspace{-0.2in}
\end{figure}

\begin{proposition}
Let $y^*_{(i,q)}$ be the optimal solution of problem~\eqref{asymmetric-assign-problem}. An optimal solution of problem~\eqref{multidim-assign-problem} is
\begin{align}
\label{eq.transfer-reconstruct}
    x_{(i,q_i^*,k_{q_i^*}^*)} &= \left\{ \begin{array}{ll} 1 & \textrm{if } y_{(i,q_i^*)} = 1 \textrm{ where } q_i^* \leq N \\
    0 & \textrm{otherwise}\end{array} \right.
    \\\label{eq.transfer-reconstruct1}
    x_{(i,k_i^*)} &= \left\{ \begin{array}{ll} 1 & \textrm{if } y_{(i,q_i^*)} = 1 \textrm{ where } q_i^* \geq N+1\\
    0 & \textrm{otherwise}\end{array} \right.
\end{align}
\end{proposition}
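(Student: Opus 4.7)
The plan is to verify three things: (i) the recovered variables $\{x_{(i,k)}, x_{(i,j,k)}\}$ satisfy every constraint of problem~\eqref{multidim-assign-problem}; (ii) they achieve an objective value equal to the optimal value of problem~\eqref{asymmetric-assign-problem}; and (iii) no other feasible solution of~\eqref{multidim-assign-problem} can achieve a strictly larger objective.

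For (i), constraint~\eqref{asym_client_constraint} gives, for each client $i$, a unique $q_i^*$ with $y^*_{(i,q_i^*)}=1$, and the recovery rules~\eqref{eq.transfer-reconstruct}--\eqref{eq.transfer-reconstruct1} set exactly one of $x_{(i,q_i^*,k_{q_i^*}^*)}$ or $x_{(i,k_i^*)}$ to one, so~\eqref{client_constraint} holds. Constraint~\eqref{relay_constraint} follows from~\eqref{asym_relay_constraint} restricted to real relays $q\leq N$; the binary property~\eqref{variable_constraint} is inherited from the 0--1 structure of optimal solutions to asymmetric assignment problems~\cite{Bertsekas1998}; and the expected link-balance constraint~\eqref{load_balance_contraint} follows immediately from Proposition~\ref{prop.uniform_connection}, since the recovered solution uses only best APs $k_i^*$ and $k_{q_i^*}^*$.

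For (ii), the decisive observation is that, by definition of $\beta$, we have $\beta_{(i,q)}=a_{(i,q,k_q^*)}$ when $q\leq N$ and $\beta_{(i,q)}=a_{(i,k_i^*)}$ when $q\geq N+1$. Hence the single nonzero term contributed by client $i$ to the objective of~\eqref{asymmetric-assign-problem}, namely $\beta_{(i,q_i^*)}$, coincides exactly with its contribution to $u$ under the recovery. Summing over $i$ yields $\sum_{(i,q)}\beta_{(i,q)}y^*_{(i,q)}=u(x)$.

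The main obstacle is step (iii), because problem~\eqref{asymmetric-assign-problem} appears to shrink the feasible region by restricting AP choices to $\mathcal K^*(i)$, $\mathcal K^*(j)$ and relay choices to $\mathcal N^*(i)\cup\{N+i\}$. I would resolve this by exhibiting, for any feasible $\tilde x$ of~\eqref{multidim-assign-problem}, a feasible $\tilde y$ of~\eqref{asymmetric-assign-problem} with objective value no smaller. The construction maps each direct assignment $\tilde x_{(i,k)}=1$ to $\tilde y_{(i,N+i)}=1$, and maps each relayed assignment $\tilde x_{(i,j,k)}=1$ to $\tilde y_{(i,j)}=1$ when $j\in\mathcal N^*(i)$ and to $\tilde y_{(i,N+i)}=1$ otherwise. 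Feasibility of $\tilde y$ is inherited from $\tilde x$, because constraint~\eqref{relay_constraint} guarantees that each real relay appears in at most one pair, and each virtual relay $N+i$ has domain $\{i\}$ so it is used at most once by construction. Term-by-term domination then follows from the definitions: $a_{(i,k_i^*)}\geq a_{(i,k)}$ by optimality of $k_i^*$; $a_{(i,j,k_j^*)}=\min\{R(d_{ij}),R(d_{jk_j^*})\}\geq a_{(i,j,k)}$ by optimality of $k_j^*$; and whenever $j\notin\mathcal N^*(i)$, the defining inequality of $\mathcal N^*$ forces $a_{(i,j,k)}\leq a_{(i,k_i^*)}=\beta_{(i,N+i)}$. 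Thus the optimal values of the two problems coincide, so the recovered $x$ is optimal for~\eqref{multidim-assign-problem}.
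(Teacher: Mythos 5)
Your proposal is correct, and it is in fact more complete than the paper's own proof. The paper's argument cites Proposition 5.6 of \cite{Bertsekas1998} for the existence of binary optimal solutions, invokes Proposition~\ref{prop.uniform_connection} for the load-balance constraint~\eqref{load_balance_contraint}, and then simply \emph{asserts} that ``the optimal objective value achieved by problem~\eqref{asymmetric-assign-problem} equals to that by problem~\eqref{multidim-assign-problem}'' without justifying why the restriction to the best APs $\mathcal K^*(i)$, $\mathcal K^*(j)$ and to the reduced relay sets $\mathcal N^*(i)\cup\{N+i\}$ cannot discard every optimal solution of~\eqref{multidim-assign-problem}. Your step (iii) supplies exactly that missing piece: the explicit map from an arbitrary feasible $\tilde x$ of~\eqref{multidim-assign-problem} to a feasible $\tilde y$ of~\eqref{asymmetric-assign-problem}, together with the term-by-term dominations $a_{(i,k_i^*)}\geq a_{(i,k)}$, $a_{(i,j,k_j^*)}\geq a_{(i,j,k)}$, and $a_{(i,j,k)}\leq\beta_{(i,N+i)}$ for $j\notin\mathcal N^*(i)$, which is the substantive content of the equivalence. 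Your steps (i) and (ii) mirror what the paper does (integrality from the assignment-polytope structure, load balance from Proposition~\ref{prop.uniform_connection}, and equality of the per-client contributions via the definition of $\beta_{(i,q)}$). The only cosmetic remark is that the uniqueness of $q_i^*$ in step (i) already presupposes the $0$--$1$ structure that you invoke a sentence later, so the integrality argument should logically come first; this does not affect correctness. What your route buys is a self-contained equivalence proof that does not lean on an unproven assertion; what the paper's route buys is brevity by outsourcing the burden to the cited propositions.
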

\begin{proof}
The existence of optimal binary solutions is assured by the theory for assignment-like problems by Proposition 5.6 in~\cite[\S~5]{Bertsekas1998}. 
Then, when we solve problem~\eqref{multidim-assign-problem} (assuring constraint~\eqref{load_balance_contraint}), we can associate every AP to more than one client or relay without reducing the throughput (transmission rate). Thus based on Proposition~\ref{prop.uniform_connection} not only constraints~\eqref{load_balance_contraint} is fulfilled, but also optimal assignments between AP and relays (clients) are achieved by using $\mathcal K^*(i)$, $\mathcal K^* (j)$ instead of $\mathcal K(i)$, $\mathcal K(j)$. Moreover, the optimal objective value achieved by problem~\eqref{asymmetric-assign-problem} equals to that by problem~\eqref{multidim-assign-problem}. Thus if an optimal assignment for problem~\eqref{multidim-assign-problem} exists, the optimal objective value can be achieved by both problems~\eqref{multidim-assign-problem} and~\eqref{asymmetric-assign-problem}. Finally, by expression~\eqref{eq.transfer-reconstruct} and \eqref{eq.transfer-reconstruct1}, the optimal assignment for problem~\eqref{asymmetric-assign-problem} gives an optimal assignment for problem~\eqref{multidim-assign-problem}, which completes the proof.
\end{proof}
Recall that there may be multiple optimal solutions for problem~\eqref{multidim-assign-problem}, if one or more clients can achieve same transmission rates to different APs via a relay by optimal assignments between clients and relays.

\subsection{Minimum Cost Flow Problem}\vspace{-0.05in}
\label{prop.min_cost_flow}
We now convert problem~\eqref{asymmetric-assign-problem} into a typical minimum cost flow problem following the methodology proposed in~\cite{Bertsekas1998}. We replace \emph{maximization} with \emph{minimization}, reversing in parallel the sign of $\beta_{(i,q)}$, and we introduce a virtual supersource client $s$ that is connected to each object $q$ through a zero cost artificial arc $(s,q)$ with feasible flow range $[0,\infty)$ for each object $q$. The supersource client $s$ generates traffic equal to $Q-M$ units while the supply for each remaining clients is of one unit. As a consequence one unit of traffic is the output of each object $q$. We refer the reader to~\cite[\S~7.2.2]{Bertsekas1998} for more details. The resulting problem is 
\begin{subequations}
\label{eq.min_cost}
\begin{align}
	\underset{y_{(i,q)}}{\textrm{minimize}} &\quad \sum_{(i,q)\in\mathcal A^*}  -\beta_{(i,q)} y_{(i,q)} \\
    {\rm s.t.}  & \quad \sum_{q\in\mathcal Q(i)} y_{(i,q)} = 1,\ \forall i \in \mathcal M \,,\\
                & \quad \sum_{i\in \mathcal M(q)} y_{(i,q)} + y_{(s,q)} = 1,\ \forall q \in \mathcal Q(i) \,, \\
                & \quad \sum_{q\in\mathcal Q} y_{(s,q)} = Q - M\,,\\
                & \quad y_{(i,q)}, y_{(s,q)} \geq 0,\ \forall (i,q)\in\mathcal A^* \,, q \in \mathcal Q\,,
\end{align}
\end{subequations}
where the decision variables $y_{(i,q)}$ were extended to include also $s$. By using the terminology of network optimization, $y_{(i,q)}$ has the meaning of amount of flow between $i$ and $q$. The first two constraints ensure that the flow \emph{supply} of each client $i$ is one unit, and a flow of one unit will reach every object $q$ respectively. The third constraint declares that $s$ is the supersource client and the flow that generates is of $Q-M$ units. The last two constraints declare that the flow of each arc may be infinite, where an arc between $i$ and $q$ denotes the connection $(i,j)$. A solution to the minimum cost flow problem \eqref{eq.min_cost} is the same to the initial~\eqref{asymmetric-assign-problem}~\cite[\S~7]{Bertsekas1998}.

By using the duality theory for minimum cost flow problems \cite[\emph{\S 4.2}]{Bertsekas1998} we formulate the dual problem 
\begin{subequations}
\label{dual_problem}
\begin{align}
    \underset{p_q, \pi_i, \lambda}{\textrm{minimize}}  & \quad \sum_{i\in \mathcal M} \pi_i + \sum_{q\in\mathcal Q} p_q - (Q-M)\lambda \\
    {\rm s.t.}      & \quad \pi_i+p_q \geq \beta_{(i,q)},\ \forall (i,q)\in \mathcal A^* \,,\\
                    & \quad \lambda \leq p_q,\ \forall q \in \mathcal Q\,.
\end{align}
\end{subequations}
where $-\pi_i$ is the Langrangian multiplier introduced to represent the price (or benefit due to the negative sign) of each client $i$, $p_q$ represents the price for object $q$ and $\lambda$ the price for $s$. The optimal solution to problem \eqref{dual_problem} allows us to derive the optimal solution to \eqref{asymmetric-assign-problem} \cite[\emph{\S 4.2, \S 5}]{Bertsekas1998}. 

In order to solve \eqref{dual_problem} we need some technical intermediate results. We start by giving the definition of $\epsilon$-\textit{Complementary Slackness} ($\epsilon$-CS)
\begin{definition}$\epsilon$-CS: 
Let $\epsilon$ be a positive scalar. An assignment $\mathcal S$ and a pair $(\pi,p)$ are said to satisfy $\epsilon$-CS if
\begin{align*}
    \pi_i + p_q \geq \beta_{(i,q)} - \epsilon\,, &\ \forall (i,q) \in \mathcal{A}^*\,,\\
    \pi_i + p_q = \beta_{(i,q)}\,, &\ \forall (i,q) \in \mathcal S\,, \\
     p_n \leq \min_{m\in \mathcal Q(\mathcal S)}p_m\,, &\ \forall n  \notin \mathcal Q (\mathcal{S})\,,
\end{align*}
where the set $\mathcal{Q}(\mathcal S) = \left\{ q | (i,q) \in \mathcal S,  \forall i \right\}$
contains all objects assigned under assignment $\mathcal S$.
\end{definition}

\begin{proposition}\label{prop:ecs}
Consider problems \eqref{asymmetric-assign-problem} and \eqref{dual_problem}. Let $\mathcal S$ be a feasible solution for problem~\eqref{asymmetric-assign-problem} and consider a dual variable pair ($\pi, p$). If $\epsilon$-CS conditions are satisfied by $\mathcal S$ and $\pi, p$, then $\mathcal S$ is optimal for problem \eqref{asymmetric-assign-problem}.
\end{proposition}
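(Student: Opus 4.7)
The plan is to establish Proposition \ref{prop:ecs} by a duality-style comparison argument: I would show that the primal objective evaluated at $\mathcal S$ dominates the primal objective at any competing feasible assignment $\mathcal S'$, up to a slack of order $M\epsilon$ that can be removed under the standard assumption of integer-valued benefits.

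First, I would evaluate the primal value along $\mathcal S$ by summing the equality in the second $\epsilon$-CS condition over $(i,q)\in\mathcal S$. Because $\mathcal S$ covers each of the $M$ clients exactly once and therefore hits exactly $M$ distinct objects in $\mathcal Q(\mathcal S)$, this yields
\[
\sum_{(i,q)\in\mathcal S} \beta_{(i,q)} \;=\; \sum_{i\in\mathcal M}\pi_i \;+\; \sum_{q\in\mathcal Q(\mathcal S)} p_q.
\]
For any other feasible assignment $\mathcal S'$, which also has cardinality $M$, applying the first $\epsilon$-CS inequality pairwise and summing produces
\[
\sum_{(i,q)\in\mathcal S'} \beta_{(i,q)} \;\leq\; \sum_{i\in\mathcal M}\pi_i \;+\; \sum_{q\in\mathcal Q(\mathcal S')} p_q \;+\; M\epsilon.
\]

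The crux of the argument, and the step I expect to be the main obstacle, is the comparison $\sum_{q\in\mathcal Q(\mathcal S')} p_q \leq \sum_{q\in\mathcal Q(\mathcal S)} p_q$. This is where the third $\epsilon$-CS condition and the asymmetry $|\mathcal Q|>M$ enter. Writing $A=\mathcal Q(\mathcal S)\setminus\mathcal Q(\mathcal S')$ and $B=\mathcal Q(\mathcal S')\setminus\mathcal Q(\mathcal S)$, the equal cardinality $|\mathcal Q(\mathcal S)|=|\mathcal Q(\mathcal S')|=M$ forces $|A|=|B|$. Every $q\in B$ lies outside $\mathcal Q(\mathcal S)$, so the third $\epsilon$-CS condition gives $p_q \leq \min_{m\in\mathcal Q(\mathcal S)} p_m \leq p_{m'}$ for each $m'\in A$. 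Pairing $B$ with $A$ one-to-one and summing produces $\sum_{q\in B} p_q \leq \sum_{m\in A} p_m$, and adding the common contribution from $\mathcal Q(\mathcal S)\cap \mathcal Q(\mathcal S')$ to both sides delivers the desired inequality on the total prices.

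Combining the three estimates then gives $\sum_{(i,q)\in\mathcal S'} \beta_{(i,q)} \leq \sum_{(i,q)\in\mathcal S} \beta_{(i,q)} + M\epsilon$, so $\mathcal S$ is within $M\epsilon$ of the primal optimum. To upgrade this near-optimality to exact optimality as claimed, I would invoke the standard integer-scaling argument of \cite[\S~7.1]{Bertsekas1998}: once the benefits $\beta_{(i,q)}$ are rescaled to integers and $\epsilon<1/M$ is chosen, the $M\epsilon$ slack falls strictly below one, and integrality of the primal objective forces $\mathcal S$ to attain the exact maximum. This closes the chain from feasibility plus $\epsilon$-CS to primal optimality.
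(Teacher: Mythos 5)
Your argument is correct and is essentially the standard primal--dual comparison underlying the result the paper cites (Proposition 7.7 of \cite{Bertsekas1998}); the paper's own proof is only a pointer to that reference, so you have in effect supplied the missing details, and the key step --- using the third $\epsilon$-CS condition together with $|\mathcal Q(\mathcal S)|=|\mathcal Q(\mathcal S')|=M$ to pair off $\mathcal Q(\mathcal S')\setminus\mathcal Q(\mathcal S)$ with $\mathcal Q(\mathcal S)\setminus\mathcal Q(\mathcal S')$ and conclude $\sum_{q\in\mathcal Q(\mathcal S')}p_q\leq\sum_{q\in\mathcal Q(\mathcal S)}p_q$ --- is exactly the right way to handle the asymmetry. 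One remark: as your own derivation shows, $\epsilon$-CS by itself only yields that $\mathcal S$ is within $M\epsilon$ of optimal, so the exact optimality asserted in the proposition statement genuinely requires the extra hypotheses of integer benefits and $\epsilon<1/M$ (the same hypotheses the paper imposes later in Proposition~\ref{prop convergence_distributed}); this is an imprecision in the statement rather than a flaw in your proof.
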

\begin{proof}
The proof naturally results from the proof of Proposition 7.7 in~\cite{Bertsekas1998}.
\end{proof}

Based on Proposition \ref{prop:ecs}, we are now in the position to present
the solution method to problem \eqref{dual_problem} by iterative centralized auction algorithms. This is the fundamental step that we need to develop a fully distributed solution mechanism. The auction algorithm is based on two phases: forward and reverse auction. 
%

We first apply forward auction. It starts from a feasible assignment $\mathcal S$ and the corresponding benefit-price pair, $(\pi, p)$, that satisfy the first two $\epsilon$-CS conditions. It picks client $i$, one of the unassigned clients under $\mathcal S$. Client $i$ finds the best object $q_i$ that provides most benefits among the objects in $\mathcal Q(i)$. Then, it bids for object $q_i$. Such an object updates its price while it is assigned to client $i$. If object $q_i$ was connected to other client $i'$, client $i'$ is left now unassigned. Forward auction terminates when all clients are assigned to objects. Since some of the objects (relay nodes or APs) are still unassigned after the application of forward auction, a reverse auction is applied. It gets as input the assignment achieved by the forward auction $\mathcal S$ and $(\pi,p)$, and checks whether there exist unassigned objects whose prices are larger than the minimum assigned object price $\lambda$ under previous assignment result, such that $\lambda = \min_{q\in\mathcal Q(\mathcal S)} p_q$.
If there are no such objects, algorithm terminates. Otherwise, it picks object $q$ whose price is larger than $\lambda$,
and finds the best client $i_q$ that provides highest price. Then, object $q$ decreases its price to attract client $i_q$. Client $i_q$ and object $q$ update their benefit and price respectively. Moreover, object $q$ connects to client $i_q$. The reverse auction algorithm terminates when all unassigned objects $q$ satisfy $p_q \leq \lambda$. Note that the scalar $\lambda$ is kept fixed throughout the algorithm and the last $\epsilon$-CS condition is satisfied upon termination. 

We refer the reader to \cite[\emph{\S 7}]{Bertsekas1998} for a detailed description of centralized auction algorithms. We are now in the position of establishing a distributed solution method in next section. 
\section{Distributed Auction Algorithms}\vspace{-0.05in}
\label{distr_algo}
In this section, we propose distributed algorithms for the solution of problem~\eqref{dual_problem} and therefore problem~\eqref{multidim-assign-problem}. First we focus on static networks, and then we consider dynamic networks where clients or relays can join or leave. 
\subsection{Static networks}\vspace{-0.05in}
The distributed solution method is based on the application of Algorithm~\ref{alg_d_c} by the clients, and Algorithm~\ref{alg_d_n} by the relays. In the solution algorithms, the vector $P_i \in \mathbb{R}^{N}$ denotes the prices vector for the relays (stored in client $i$), $p_j$ denotes the price of a relay node $j$ (stored in relay $j$), and $k_i^*$ represents AP $k_i^*$ for client $i$. In what follows we present the basic steps to establish the distributed algorithms. 

Initially the prices of all the objects are set to zero in both algorithms. On the client side (Algorithm~\ref{alg_d_c}), every client $i$ fulfilling conditions in Line 11 finds the best object $q_i$ using the local knowledge of the prices in Line 12. From Line 13 to 18, client $i$ calculates the largest bid for the object $q_i$. Then in Line 19, it sends the request to the object $q_i$. On the relay (object) side (Algorithm~\ref{alg_d_n}), when object $q_i$ receives the request from clients with different bids, as described in Line 2, it chooses the best client $i_q$ that provides highest bid and higher price compared to the old price $p_j$ (Line 3 to 4). Then, object $q_i$ updates its price and feedbacks the latest price to all requested clients as described in Line 6 to Line 9. The auction algorithms terminate when there is no client fulfilling conditions in Algorithm~\ref{alg_d_c}, Line 9.

\begin{algorithm}[t]\scriptsize
\caption{\small Distributed Auction Algorithm for Client $i$}
\label{alg_d_c}
\begin{algorithmic}[1]
\REQUIRE Initialize $q_i=k_i^*, P_i=\mathbf{0}$
\WHILE {\TRUE}
\IF {receive {\bf no} and new price $p_{q_i}$ from previous object $q_i$}
\STATE Connect to object $k_i^*$
\STATE Let $[P_i]_{q_i} = p_{q_i}$ and $q_i = k_i^*$
\ENDIF
\IF {previous object $q_i$ leaves}
\STATE Connect to object $k_i^*$ and let $q_i = k_i^*$
\ENDIF
\IF{$q_i = k_i^* \neq \arg \max_{q\in \mathcal Q(i)} \left\{ \beta_{(i,q)} - [P_i]_q \right\}$ }
\STATE Find the bid $b_{i{q_i}}$ for the best object $q_i$, such as:\\
$q_i = \arg \max_{q\in \mathcal Q(i)} \left\{ \beta_{(i,q)} - [P_i]_q \right\},$
\STATE Find $u_i = \max_{q\in \mathcal Q(i)} \left\{\beta_{(i,q)} - [P_i]_q\right\},$
\STATE Find $\omega_i = \max_{q\in \mathcal Q(i), q\neq q_i} \left\{ \beta_{(i,q)}- [P_i]_q \right\},$
\IF {$q_i$ is the only client in $ \mathcal Q(i)$}
\STATE Let $\omega_i \rightarrow -\infty$
\ENDIF
\STATE Find the bid $b_{i{q_i}} = p_{q_i}+u_i- \omega_i+\epsilon$
\STATE Send the transmit request along with the bid $b_{i{q_i}}$ to object $q_i$
\STATE Receive the respond message, (\textbf{yes} or \textbf{no}) and $p'_{q_i}$ from object $q_i$
\IF{respond is \textbf{yes}}
\STATE Connected to object $q_i$
\ELSE
\STATE Connected to the AP in set $\mathcal Q(i)$ and Let $q_i = k_i^*$
\ENDIF
\STATE Let $p_{q_i}=p'_{q_i}$
\ENDIF
\ENDWHILE
\end{algorithmic}
\end{algorithm}

\begin{proposition}
\label{prop.finity-convergence}
Consider $M$ clients, $N$ relay nodes, and $K$ APs. The distributed auction algorithms given in Algorithm~\ref{alg_d_c} and Algorithm~\ref{alg_d_n} terminate within a finite number of iterations bounded by $M N^2 \lceil \Delta /\epsilon \rceil$, where $\Delta = \max_{(i,q)\in \mathcal A^*} \beta_{(i,q)} - \min_{(i,q)\in \mathcal A^*}\beta_{(i,q)}$.
\end{proposition}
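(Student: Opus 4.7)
The plan is to adapt the classical termination argument for Bertsekas' auction algorithm to the distributed, asymmetric assignment setting obtained in Section~\ref{prop.ass_assignment}. The core observation, which I would establish first, is a strict monotone price-update property: whenever object $q_i$ accepts a bid in Algorithm~\ref{alg_d_n}, its price strictly increases by at least $\epsilon$. This follows directly from inspection of Lines~12--16 of Algorithm~\ref{alg_d_c}, where the bid is defined as $b_{i q_i} = p_{q_i} + u_i - \omega_i + \epsilon$; since $u_i$ is the maximum and $\omega_i$ the second-best utility over $\mathcal{Q}(i)$, we have $u_i \ge \omega_i$, so the accepted bid exceeds the old price by at least $\epsilon$. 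Combined with the fact that $p_q$ is never decreased in the forward-only Algorithm~\ref{alg_d_n}, the price of every object is nondecreasing across the run.

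The next step I would carry out is a persistence lemma: once an object is assigned (i.e., some client has bid for it successfully), it remains assigned for the rest of the execution, although the identity of the owner may change. This is essentially by construction, because losing an auction only happens to a client that is displaced by another accepted bid. Together with Step~1, this lets me argue that each assignment change corresponds bijectively with a price-increasing event of at least $\epsilon$ on some object.

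I would then bound the number of bids that any single object can receive. For object $q$, a client $i\in\mathcal{M}(q)$ will only bid for $q$ while $\beta_{(i,q)} - p_q$ is the maximum utility in $\mathcal{Q}(i)$. Since $\beta_{(i,q)}\le \max \beta$ and the competing utility is at least $\min \beta - p_{q'}$ for some other object $q'$ with $p_{q'}\ge 0$, one shows that after at most $\lceil \Delta/\epsilon\rceil$ accepted bids the price $p_q$ exceeds any client's willingness to bid, so no further bid on $q$ is possible. Multiplying the per-object cap $\lceil \Delta/\epsilon\rceil$ by the number of objects and by the worst-case number of clients that can repeatedly revisit an object through chains of displacements gives the stated bound $MN^2\lceil \Delta/\epsilon \rceil$. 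Finally, when no object can receive further bids, the condition in Line~9 of Algorithm~\ref{alg_d_c} fails for every client, so the algorithm terminates.

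The main obstacle I anticipate is the careful bookkeeping in Step~3: the virtual relay $N{+}i$ is accessible only to client $i$ (so its price dynamics are degenerate), and the distributed asynchronous message passing in Algorithm~\ref{alg_d_c} permits a client to re-enter an auction after being displaced and to accumulate stale price information via $P_i$. This re-entry is what inflates the naive $(N+M)\lceil \Delta/\epsilon\rceil$ bid count to the $MN^2\lceil \Delta/\epsilon\rceil$ iteration count, and the delicate step is to show that stale prices cannot cause a client to bid below the true price by enough to violate $\epsilon$-CS, so Proposition~\ref{prop:ecs} still applies at termination and the counting remains valid.
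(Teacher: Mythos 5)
Your proposal is correct and follows essentially the same route as the paper's proof: both rest on the observation that every accepted bid raises an object's price by at least $\epsilon$ (so each relay can absorb at most $\lceil\Delta/\epsilon\rceil$ price-increasing bids before it is dominated by the client's own AP $k_i^*$), and both obtain the extra factor of $MN$ by charging the rejected bids caused by stale local price vectors $P_i$ to the price-update events between consecutive accepted bids. The only remark worth making is that your step~3 phrasing (``number of clients that can revisit an object'') understates the $MN$ stale-bid overhead that your step~4 correctly identifies, but the paper's own accounting of that factor is no more detailed than yours.
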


\begin{proof}
Firstly notice that client $i$ can at least connect to the AP $k_i^*$. Thus, problem~\eqref{asymmetric-assign-problem} has a feasible assignments with positive net benefit (total throughput). Moreover, in every iteration there exist clients that are not yet informed for the latest price of some objects. This implies that these clients may place low bids for expensive relay nodes (objects). However, they will be informed after the biding. Based on this observation, we can disregard all lower bids and only consider bids by informed clients that increase the actual price of the relay node. Hence, we only need to show that every relay node can only receive a finite number of such bids.

Note that whenever $m$ bids are placed for a relay node, its price must increase by at least $m\epsilon$. Thus, when $m$ is sufficiently large, the relay node will become too expensive to be attractive compared to other relay nodes that have not yet received any bids. It follows that there is a limited number of bids that any relay node can receive by informed clients. Therefore, the auction will continue until each one of the clients has been associated to one object.

In the worst case, we consider that all the clients persistently place minimum bid increments $\epsilon$. Furthermore, they won't win the object until the local price vectors in clients is updated to the latest. Without considering the price update, the number of iterations of the auction algorithm is bound by $N \lceil \Delta/\epsilon \rceil$, because every node $i$ will eventually be associated to one object (the best AP) when the benefit of all the relay nodes in $\mathcal Q(i)$ is lower than that of the best AP. Meanwhile, in every iteration, the throughput benefit decreases monotonically at least by $\epsilon$. On the other side, the number of iterations for price update is bounded by $MN$. Thus we can find that the number of iterations of the distributed algorithms is bounded by $M N^2 \lceil \Delta/\epsilon \rceil$, which completes the proof.
\end{proof}

\begin{algorithm}[t]\scriptsize
\caption{\small Distributed Auction Algorithm for Relay $j$}
\label{alg_d_n}
\begin{algorithmic}[1]
\REQUIRE Initialize the client $i_j=0$, and price $p_j=0$
\WHILE {\TRUE}
\IF {there are transmit request from clients $i\in \mathcal M(j)$ with bids $b_i$}
\STATE Find the best client $i_j$ such that:\\
$i_j = \arg \max_{i\in \mathcal M(j)} \left\{ b_i \right\},$
\STATE Find $p_{i_q} = \max_{i\in \mathcal M(j)} \left\{ b_i  \right\},$
\IF{$p_{i_q}-p \geq \epsilon$}
\STATE Set $p_j=p_{i_q}$
\STATE Send respond message, \textbf{no} and $p_j$, to the other clients that requested and previous client $i_j$
\STATE Send respond message, \textbf{yes} and $p_j$, to client $i_q$
\STATE Set $i_j = i_q$
\ELSE
\STATE Send respond message, \textbf{no} and $p_j$, to all clients that requested
\ENDIF
\ENDIF
\ENDWHILE
\end{algorithmic}
\end{algorithm}

\begin{remark}\vspace{-0.1in}
Note that the bound on the number of iterations is conservative and it is based on the absence of broadcast transmission in the network. Otherwise, if every relay node $j$ can broadcast its latest prices to the clients in set $\mathcal M(j)$, the iteration would be bounded by $N^2 \lceil \Delta/\epsilon \rceil$.
\end{remark}

\begin{proposition}\vspace{-0.1in}
\label{prop convergence_distributed}
Let $\epsilon$ be a desired positive constant. The final assignment obtained by Algorithm~\ref{alg_d_c} and Algorithm~\ref{alg_d_n} is within $M\epsilon$ of the optimal assignment benefit of problem~\eqref{asymmetric-assign-problem}. The final assignment is optimal if $\beta_{(i,q)}$, $\forall (i,q)\in\mathcal A^*$, is integer\footnote{If the benefits are rational numbers, they can be scaled up to integer
by multiplication with a suitable common number.} and $\epsilon < 1/M$.
\end{proposition}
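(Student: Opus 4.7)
The plan is to build the argument on two pillars: a structural invariant, namely that the algorithm maintains $\epsilon$-complementary slackness throughout, and a gap lemma that translates $\epsilon$-CS into the claimed suboptimality bound.

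First I would establish the invariant by induction on the bidding rounds. At initialization the prices are zero and the (empty) partial assignment trivially satisfies the $\epsilon$-CS conditions of the definition preceding Proposition~\ref{prop:ecs}. At each step, client $i$ in Algorithm~\ref{alg_d_c} selects $q_i = \arg\max_{q\in\mathcal Q(i)}\{\beta_{(i,q)}-[P_i]_q\}$, computes $u_i$ and the second-best value $\omega_i$, and submits bid $b_{iq_i} = p_{q_i}+u_i-\omega_i+\epsilon$. When object $q_i$ accepts in Algorithm~\ref{alg_d_n}, I would set $\pi_i := \beta_{(i,q_i)}-b_{iq_i}=\omega_i-\epsilon$ and show:
\begin{itemize}
\item $\pi_i+p_{q_i}^{\text{new}}=\beta_{(i,q_i)}$ (tight condition on the new assignment) by construction of $b_{iq_i}$;
\item $\pi_i+[P_i]_q = (\omega_i-\epsilon)+[P_i]_q \ge \beta_{(i,q)}-\epsilon$ for every other $q\in\mathcal Q(i)$, because $\omega_i\ge \beta_{(i,q)}-[P_i]_q$;
\item the third condition on unassigned-object prices holds because every client always retains the virtual relay $N+i$ attached to $k_i^*$ as a fallback (so $\mathcal Q(\mathcal S)$ is never artificially depleted), and prices on objects that have received no bid remain at their initial minimum value.
\end{itemize}
The previously displaced client $i'$ becomes unassigned but its old CS condition with the other objects is untouched, so $\epsilon$-CS is preserved globally.

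Second, I would prove the suboptimality bound. By Proposition~\ref{prop.finity-convergence} the algorithm terminates in finitely many iterations with a feasible assignment $\mathcal S$ and a pair $(\pi,p)$ satisfying $\epsilon$-CS. Setting $\lambda=\min_{q\in\mathcal Q(\mathcal S)}p_q$ makes $(\pi,p,\lambda)$ feasible for the dual~\eqref{dual_problem}, and weak duality for~\eqref{eq.min_cost} gives
\begin{equation*}
\sum_{i\in\mathcal M}\pi_i+\sum_{q\in\mathcal Q}p_q-(Q-M)\lambda \;\ge\; \sum_{(i,q)\in\mathcal S^\ast}\beta_{(i,q)}
\end{equation*}
for the optimal assignment $\mathcal S^\ast$. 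Evaluating the same dual expression on $\mathcal S$ using the tight CS equalities for assigned pairs and the definition of $\lambda$ for unassigned objects, every client contributes slack at most $\epsilon$ (because its bid falls short of $\omega_i$ by exactly $\epsilon$), while the object and source terms cancel exactly thanks to the asymmetric balance $(Q-M)\lambda = \sum_{q\notin\mathcal Q(\mathcal S)}p_q$ enforced by the third CS condition. Summing gives
\begin{equation*}
\sum_{(i,q)\in\mathcal S^\ast}\beta_{(i,q)} - \sum_{(i,q)\in\mathcal S}\beta_{(i,q)} \;\le\; M\epsilon,
\end{equation*}
which is the claimed bound. The integer case is then immediate: if all $\beta_{(i,q)}$ are integers and $\epsilon<1/M$, the gap is strictly less than $1$ yet the two sums are integers, so equality holds.

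The hard part, I expect, is verifying the third $\epsilon$-CS condition on unassigned-object prices, because the distributed algorithm intertwines what the centralized auction would call ``forward'' and ``reverse'' phases: clients can oscillate between real relays and the virtual relay $N+i$, and I must check that this never lets some genuinely cheap object remain unassigned with a price larger than $\lambda$. The key observation that resolves this is that a client only abandons object $q_i$ when $\beta_{(i,q_i)}-p_{q_i}$ drops below the value of another $q\in\mathcal Q(i)$; hence at termination every unassigned object is strictly dominated for every client in $\mathcal M(q)$, which together with the monotone non-decrease of prices on contested objects yields the required inequality.
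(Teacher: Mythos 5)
Your proposal is correct and follows essentially the same route as the paper: both rest on weak duality between the assignment benefit and the dual price function, on the observation that an accepted bid leaves the winning client within $\epsilon$ of its best net value (because the stale local prices $[P_i]_q$ only underestimate the true prices, which are nondecreasing), and on summing the per-client $\epsilon$ slack over the $M$ clients, with the same integrality argument for the exact case. The only difference is presentational: you maintain the $\epsilon$-CS invariant inductively and invoke the formal dual~\eqref{dual_problem} with $\lambda$, whereas the paper verifies the needed inequality only at termination and works directly with the dual function; your worry about the third $\epsilon$-CS condition dissolves because, in the static setting, an object that is unassigned at termination never accepted a bid and thus still carries the minimal initial price.
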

\begin{proof}
The total number of objects is $M+N$, and recall that $\epsilon>0$. Then, given any assignment $\{\mathcal S|(i,q_i),i=1,\dots,M\}$, the net benefit satisfies
\begin{equation*}
    \sum_{i=1}^{M} \beta_{(i,q_i)} \leq \sum_{q=1}^{M+N} p_q + \sum_{i=1}^{M} \max_q \{ \beta_{(i,q)} - p_q \}\,,
\end{equation*}
for any set of prices, since the second term of the right hand side is no less than $\sum_{i=1}^{M} (\beta_{(i,q_i)}-p_{q_i})$,
while the first term is no less than $\sum_{i=1}^{M} p_{q_i}$. Therefore, $B^* \leq D^*$, 
where $B^*$ is the optimal total assignment benefit for problem~\eqref{multidim-assign-problem}
\[
    B^* = \max_{\substack{q_i\in \mathcal Q(i), \\q_i \neq q_m, \textrm{if }i\neq m}} \sum_{i=1}^{M} \beta_{(i,q_i)}\,,
\]
and $D^*$ is the optimal minimum for dual problem~\eqref{dual_problem}, denoted by
\[
    D^* = \min_{\substack{p_q\\q=1,\dots,M+N}} \left\{ \sum_{q=1}^{M+N} p_q + \sum_{i=1}^{M} \max_{q\in \mathcal Q(i)} \{ \beta_{(i,q)} - p_q \} \right\}\,.
\]
On the other hand, consider the assignment $\mathcal S$ together with the set of price $(\bar{p}_q)_{ q = 1,\dots,M+N}$ that is stored in the relays (objects). Moreover, notice that once a bid by client $i$ is accepted by object $q_i$, we have
\[
    \beta_{(i,q_i)} - \max_{q\in \mathcal Q(i), q\neq q_i} \{\beta_{(i,q)}- [P_i]_q\} \geq \bar{p}_{q_i} \,,
\]
where $[P_i]_q$ is the local price in client $i$ for relay $q$. The previous expression indicates that
\begin{align*}
    \beta_{(i,q_i)} - \bar{p}_{q_i} 	\geq& \max_{q\in \mathcal Q(i), q\neq q_i} \{\beta_{(i,q)}-[P_i]_q\} \\
    								\geq& \max_{q\in \mathcal Q(i), q\neq q_i} \{\beta_{(i,q)}-\bar{p}_q\}  \,,
\end{align*}
since $[P_i]_q \leq \bar{p}_q$ for every object $q$. Thus object $q_i$ is the best object for client $i$, even though $[P_i]_{q_i} \leq \bar{p}_{q_i}$. Furthermore, we have
\[
    \beta_{(i,q_i)} - \bar{p}_{q_i} \geq \max_{q\in \mathcal Q(i)} \{\beta_{(i,q)} - \bar{p}_q \} - \epsilon\,.
\]
Now considering all $i$s, we see that
\begin{align*}
    \sum_{i=1}^{M} \bigg( \bar{p}_{q_i} + & \max_{q\in \mathcal Q(i)}\{ \beta_{(i,q)}-\bar{p}_q\} \bigg) =\\
     & \sum_{q=1}^{M+N} \bar{p}_q  + \sum_{i=1}^{M} \left( \max_{q\in \mathcal Q(i)}\{ \beta_{(i,q)}-\bar{p}_q\} \right) \geq D^* \,,
\end{align*}
since the prices of unassigned object are equal to zero. Thus
\begin{align*}
    D^* \leq &  \sum_{i=1}^{M} \left( \bar{p}_{q_i} + \max_{q\in \mathcal Q(i)}\{ \beta_{(i,q)}-\bar{p}_q\} \right) \\
    \leq& \sum_{i=1}^{M} \beta_{(i,q_i)} + M\epsilon \leq B^*+M \epsilon\,.
\end{align*}
However, we showed earlier that $B^*\leq D^*$, and it follows that the net benefit $\sum_{i=1}^{M}\beta_{(i,q)}$ is within $M \epsilon$ of the optimal value $B^*$. Now consider $\epsilon < 1/M$, then the assignment obtained is strictly within 1 of being optimal. Furthermore, if all parameters $\beta_{(i,q)}$ are integer, the minimum difference from optimal is 1, which forces the assignment be optimal. This completes the proof. 
\end{proof}

\begin{algorithm}[t]\scriptsize
\caption{\small Distributed Reverse Auction Algorithm for Relay $j$}
\label{alg_d_r_n}
\begin{algorithmic}[1]
\REQUIRE Initialize the client $i_j=0$ and price $p_j=0$
\WHILE {\TRUE}
\IF {join a network}
\STATE Set $p_j = \infty$
\ENDIF

\IF {there are transmit request from clients $i\in \mathcal M(j)$ with information $b_i$}
\STATE \dots \COMMENT{Algorithm~\ref{alg_d_n} Line 3 $\sim$ 12}
\ENDIF

\IF {unassigned \AND $p_j \neq 0$}
\STATE Send survey request to clients in $\mathcal{M} (j)$
\STATE Receive $(\beta_{(i,j)}, \beta_{(i,q_i)},[P_i]_{q_i})$ from clients $i$ in $\mathcal{M}'(j)$,\\
 where $\mathcal{M}'(j) = \{i| \beta_{(i,j)} - (\beta_{(i,q)}-[P_i]_{q_i})\geq \epsilon, \forall i\in\mathcal M\}$
\STATE Find the best client $i_q$ such that\\
$i_q = \arg \max_{i\in \mathcal M'(j)} \left\{ \beta_{(i,j)} - (\beta_{(i,q)}-[P_i]_{q_i}) \right\},$
\STATE Find $\gamma_j = \max_{i\in \mathcal M'(j)} \left\{ \beta_{(i,j)} - (\beta_{(i,q)}-[P_i]_{q_i}) \right\},$
\STATE Find $\omega_j = \max_{i\in \mathcal M'(j),i\neq i_q} \left\{ \beta_{(i,j)} - (\beta_{(i,q)}-[P_i]_{q_i}) \right\},$
\STATE Let $\lambda = 0$, since $M\geq N$
\IF {$i_q$ is the only client}
\STATE Let $\omega_j \rightarrow -\infty$
\ENDIF
\IF {$\lambda \geq \gamma_j - \epsilon$}
\STATE Let $p_j = 0$
\ELSE
\STATE Connect to client $i_q$
\STATE Let $p_j = \gamma_j - \min\left\{ \gamma_j - \lambda, \gamma_j - \omega_j + \epsilon  \right\}$, $i_j = i_q$
\ENDIF
\ENDIF
\ENDWHILE
\end{algorithmic}
\end{algorithm}

\subsection{Dynamic networks}\vspace{-0.05in}

Suppose now that some clients or relays leave or join the mmW network. Obviously, the optimal assignment $\mathcal S^*$ will change. To handle this dynamic evolution of the network and to improve the stability of our approach, we propose a distributed reverse auction algorithm. 

A client joining the network can be seen as a relay leaving the network, since the network gains one more unassigned client. Similarly, a relay joining is the same as a client leaving the network, since there will be one more unassigned relay in the assignment. In case that there exists one more unassigned client, Algorithm~\ref{alg_d_c} and Algorithm~\ref{alg_d_n} can still work well if the new unassigned client is initialized based on the requirements of Algorithm~\ref{alg_d_c}. The unassigned client is always able to place a bid to its best relay and get its local price vector update. On the other hand, an unassigned relay with a high price is not attractive to clients any more, which keeps the network away from the optimal assignment. The distributed reverse auction algorithm for the relays is described by Algorithm~\ref{alg_d_r_n}. The main idea is similar to the reverse auction algorithm presented in \cite[\emph{\S 7}]{Bertsekas1998}, in which every object tries to reduce its price to attract the best client. In particular, if a new relay $j$ joins a network, it initializes its price to $p_j = \infty$ as in Lines 2 to 4 in Algorithm~\ref{alg_d_r_n}. Moreover, if the relay is not connected to any client and its price is higher than $\lambda = 0$, it will invite the clients in the set $\mathcal M'(j)$, to potentially increase the benefit, and send a request message $(\beta_{(i,j)},\beta_{(i,q_i)},[P_i]_{q_i})$ as in Lines 9 to 10. Then relay $j$ finds the best client to be connected as in Lines 11 to 24.

\begin{table}[t]\vspace{-0.05in}
  \centering
  \scriptsize
  \caption{Simulation Parameters}\label{tab_sim_para}
   \vspace{-0.15in}
  \begin{tabular}{ccc}
    \hline
    \textbf{Parameters} & \textbf{Symbol}   & \textbf{Value} \\
    \hline
    System Bandwidth    & $W$               & 1200 MHz \\
    Transmission Power  & $P_T$             & 0.1 mW \\
    Background Noise    & $N_0$             & -134 dBm/MHz \\
    Path Loss Exponent  & $\eta$            & 2 \\
    Reference distance  & $d_0$             & 1 m \\
    Wave Length         & $\lambda$         & 5 mm \\
    Antenna Gains       & $G_R$ and $G_T$   & 1 \\
    \hline
  \end{tabular} 
\vspace{-0.1in}
\end{table}

\begin{proposition}
\label{prop.within-M-epsilon}
Consider problem~\eqref{asymmetric-assign-problem}. The distributed reverse auction algorithm described by Algorithm~\ref{alg_d_r_n} terminates with an assignment that is within $M\epsilon$ of being optimal.
\end{proposition}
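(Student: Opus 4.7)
The plan is to follow the same two-part strategy used for Propositions \ref{prop.finity-convergence} and \ref{prop convergence_distributed}: first show that Algorithm~\ref{alg_d_r_n} terminates after a finite number of iterations, and then show that the final assignment together with the resulting price vector satisfies the three $\epsilon$-CS conditions, so that Proposition~\ref{prop:ecs} together with the $M\epsilon$ bounding argument from Proposition~\ref{prop convergence_distributed} can be invoked.

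For termination, I would observe that Algorithm~\ref{alg_d_r_n} behaves in two modes. When a relay receives client bids (Line~6), it coincides with Algorithm~\ref{alg_d_n} and its finite-time behaviour is already covered by Proposition~\ref{prop.finity-convergence}. When an unassigned relay $j$ with $p_j>\lambda=0$ enters the reverse branch, the update $p_j = \gamma_j-\min\{\gamma_j-\lambda,\,\gamma_j-\omega_j+\epsilon\}$ either reassigns $j$ with a strictly smaller price (reduction of at least $\epsilon$ by the $+\epsilon$ term) or drives $p_j$ to $0$ and leaves it unassigned. Since prices live in $[0,\Delta]$ and each reverse step costs at least one $\epsilon$ on a relay whose price is to be reduced, each relay can execute the reverse branch only $\lceil \Delta/\epsilon\rceil$ times, and the combined number of events is bounded along the lines of Proposition~\ref{prop.finity-convergence}.

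For the $M\epsilon$ bound, the crux is that the reverse step is engineered to preserve the three $\epsilon$-CS conditions. The first condition $\pi_i+p_q\geq \beta_{(i,q)}-\epsilon$ follows from the $-\omega_j+\epsilon$ term, exactly as the forward bid guarantees this inequality via $b_{iq_i}$ in Algorithm~\ref{alg_d_c}; the equality $\pi_i+p_q=\beta_{(i,q)}$ for the newly assigned pair $(i_q,j)$ holds because the price is set to $\gamma_j-(\gamma_j-\omega_j+\epsilon)$ when this minimum is attained, matching the local $\pi_{i_q}$. The third condition $p_n\leq\min_{m\in\mathcal Q(\mathcal S)}p_m$ is enforced by the $\lambda=0$ check: upon termination every remaining unassigned relay has $p_j=0$, while assigned relays have non-negative prices, so the bound is automatic. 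Having verified $\epsilon$-CS, I would then repeat the chain $B^*\leq D^*\leq \sum_i \bar p_{q_i}+\sum_i \max_{q}\{\beta_{(i,q)}-\bar p_q\} \leq \sum_i \beta_{(i,q_i)}+M\epsilon$ from the proof of Proposition~\ref{prop convergence_distributed} verbatim, which yields optimality within $M\epsilon$.

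The main obstacle is showing that $\epsilon$-CS is truly preserved when forward bids (from Algorithm~\ref{alg_d_c}) and reverse price reductions (Lines~9--24 of Algorithm~\ref{alg_d_r_n}) are interleaved asynchronously, because the local price copy $[P_i]_{q_i}$ used by client $i$ may lag behind the true $\bar p_{q_i}$ stored at the relay. I would handle this with the same reasoning as in the proof of Proposition~\ref{prop convergence_distributed}: since $[P_i]_q\leq \bar p_q$ is an invariant, every accepted bid still certifies that $q_i$ is the $\epsilon$-best object for $i$ under the true prices, and the reverse step inherits this invariant because it only lowers prices on relays whose clients are queried with their current local estimates via $\mathcal M'(j)$.
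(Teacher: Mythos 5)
The paper's own proof of this proposition is a one-line appeal to Proposition 7.8 of~\cite{Bertsekas1998} with $\lambda=0$, and your proposal is essentially a correct unpacking of exactly that citation: you show the interleaved forward/reverse iterations preserve $\epsilon$-CS (using the invariant $[P_i]_q\leq\bar p_q$), note that with $\lambda=0$ the unassigned relays end at price zero so the third $\epsilon$-CS condition is automatic, and then reuse the $B^*\leq D^*\leq\sum_i\beta_{(i,q_i)}+M\epsilon$ weak-duality chain from Proposition~\ref{prop convergence_distributed} — the same route, just made self-contained. The one step to repair is your termination bookkeeping: a reverse iteration need \emph{not} lower $p_j$ by $\epsilon$ (when $\omega_j=p_j+\epsilon$ the new price $\max\{0,\omega_j-\epsilon\}$ equals the old one), so "each reverse step costs at least one $\epsilon$ on the relay's price" is not a valid monotone witness; the quantity that provably increases by at least $\min\{\gamma_j-\lambda,\gamma_j-\omega_j+\epsilon\}\geq\epsilon$ at every successful reverse reassignment is the captured client's profit $\pi_{i_q}=\beta_{(i_q,j)}-p_j$, which is bounded above, and it is this profit-based argument (combined with the price-increase argument for forward bids) that closes the finiteness claim in the cited proof.
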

\begin{proof}
The proof naturally results from the proof of Proposition 7.8 in~\cite{Bertsekas1998} with $\lambda = 0$.
\end{proof}

\section{Numerical Examples}\vspace{-0.05in}
\label{sec.numerical-examples}
In this section we compare our approach to \begin{inparaenum}[\itshape a\upshape)]
\item random association policy, \item RSSI-based policy defined by 802.11, and \item optimal solution to the optimization problem~\eqref{multidim-assign-problem} using centralized binary integer programming solver \texttt{bintprog} in Matlab\footnote{There are not approaches in literature to consider jointly the association and the relaying problems in mmW networks and therefore, it is not fair to provide any comparison to other approaches for traditional wireless networks.}.
\end{inparaenum}

We define the SNR operating point at a distance $d$ [distance units] form any AP as\vspace{-0.05in}
\begin{equation}\vspace{-0.05in} \label{eq:SNR}\nonumber
\texttt{SNR}(d) = \left\{ \begin{array}{ll}
  \displaystyle{{P}_{0}}\lambda^2/(16\pi^2N_0W) & d\leq d_0\\
 \displaystyle{{P}_{0}}\lambda^2/(16\pi^2N_0W)\cdot\left({d}/{d_0}\right)^{-\eta} &  \textrm{otherwise}\ .
   \end{array} \right.
\end{equation}
Circular cells as illustrated in Fig.~\ref{fig network} are considered, where the radius of each cell $r$ is chosen such that $\texttt{SNR}(r)=10$~dB. The APs are located such that the the distance between any consecutive APs is $1.1r$. The clients and relays are uniformly distributed at random over this area. The main parameters used in simulations are listed in Table~\ref{tab_sim_para}. To measure the average performance of the algorithms, we run several Monte-Carlo simulations considering $T=500$ time slots for each experiment, and $N_r = 1000$ experiments with different topologies. Then, the average results over all the experiments are plotted. We used typical values for $\epsilon$ ($\epsilon=0.1$) where not stated in the description, and we set $I=0$, as discussed in section \ref{sec.system-model}. Obstacles may block the LOS link between clients, relays, and APs with probability $0.1$ and for a duration of $10$ ms.

\begin{figure}[t]\vspace{-0.2in}
\centering 
	\subfloat[]{\hspace{-0.1in}
	\centering
	\includegraphics[width = 0.25\textwidth]{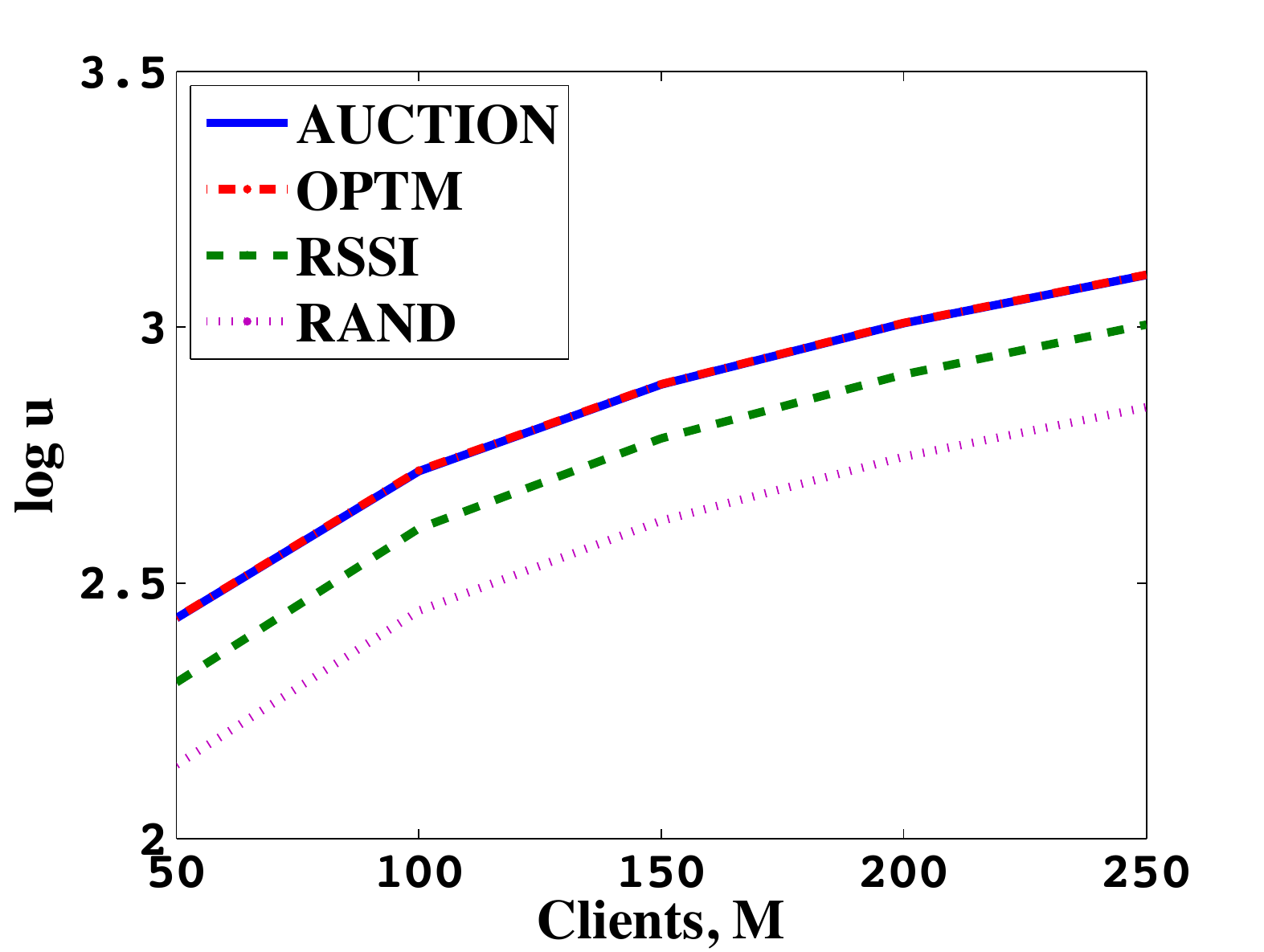}
	\label{fig.PCC_5AP}
	}
	\subfloat[]{
	\centering
	\includegraphics[width = 0.25\textwidth]{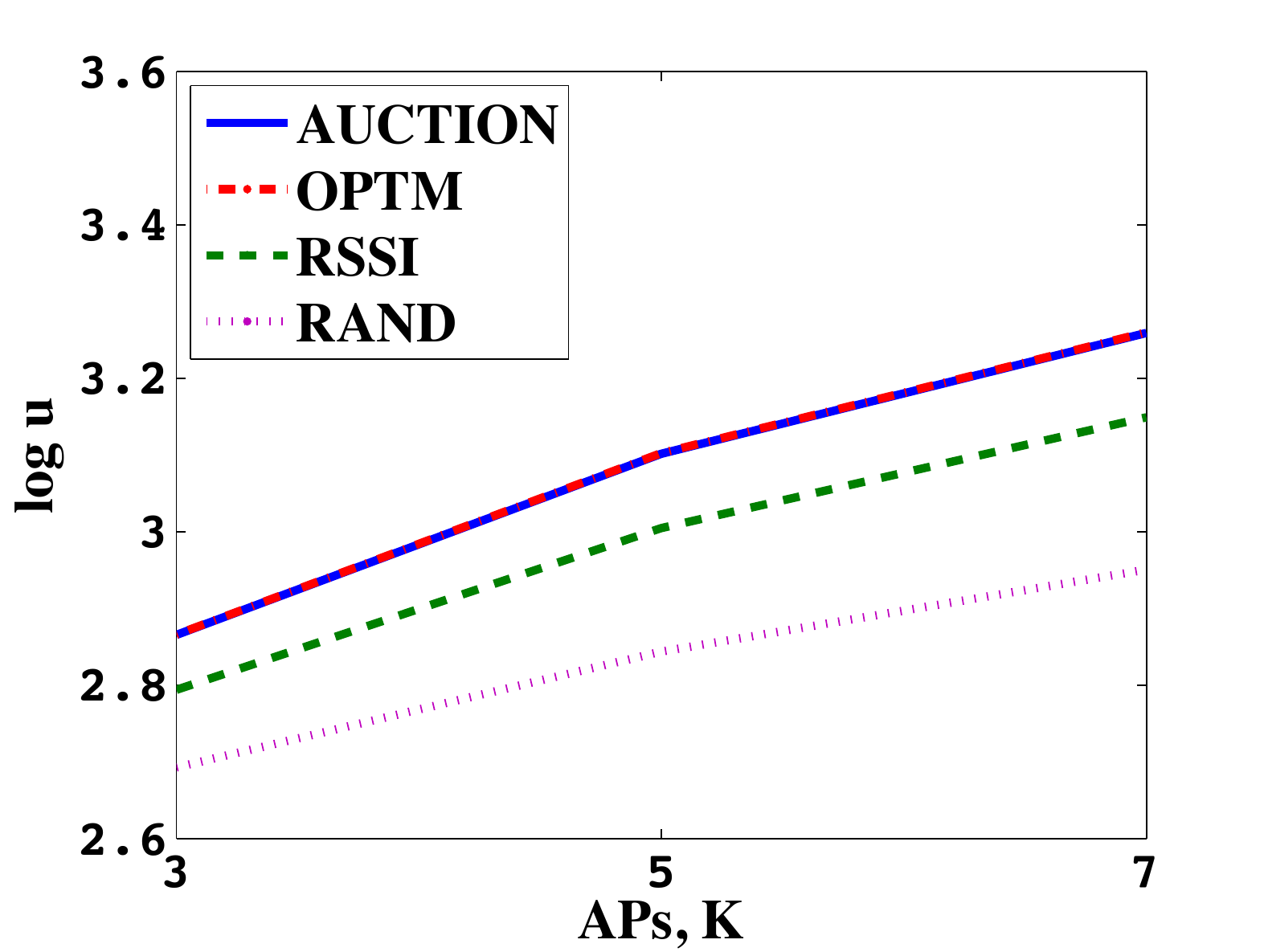}
	\label{fig.PCR_5AP}
	}
	\vspace{-0.1in}\caption{Objective values of AUCTION, OPTM, RAND, and RSSI (a) $\log u$ vs. number of clients with 10 APs and 25 relays; (b) $\log u$ vs. number of APs and 25 relays.}
	\label{fig.PCR}\vspace{-0.2in}
\end{figure}

\begin{figure}[t]\vspace{-0.05in}
  \centering
  \subfloat[]{\hspace{-0.1in}
    \centering
    \includegraphics[width = 0.25\textwidth]{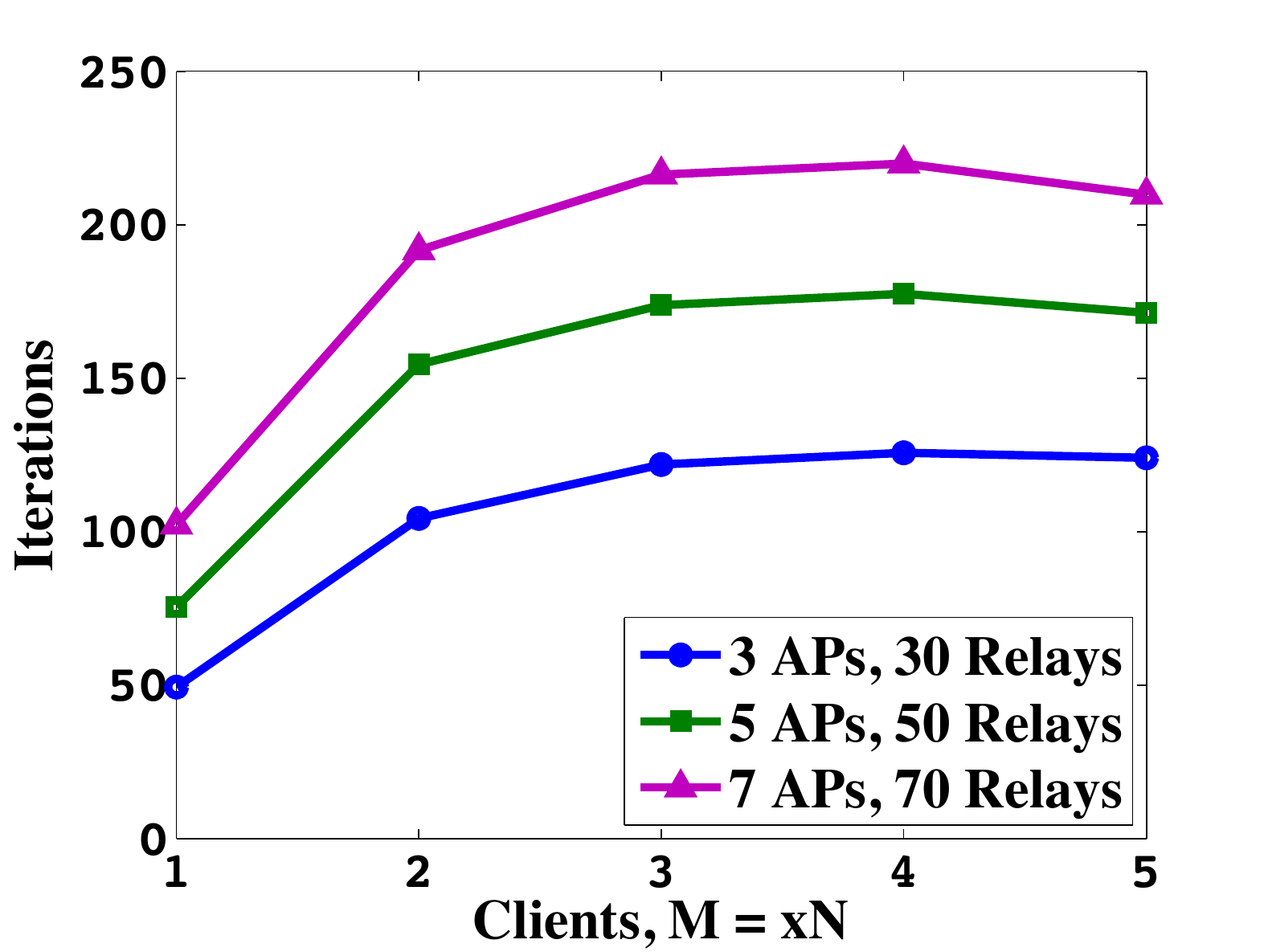}
    \label{fig.CSC}
    }
  \subfloat[]{
    \centering
    \includegraphics[width = 0.25\textwidth]{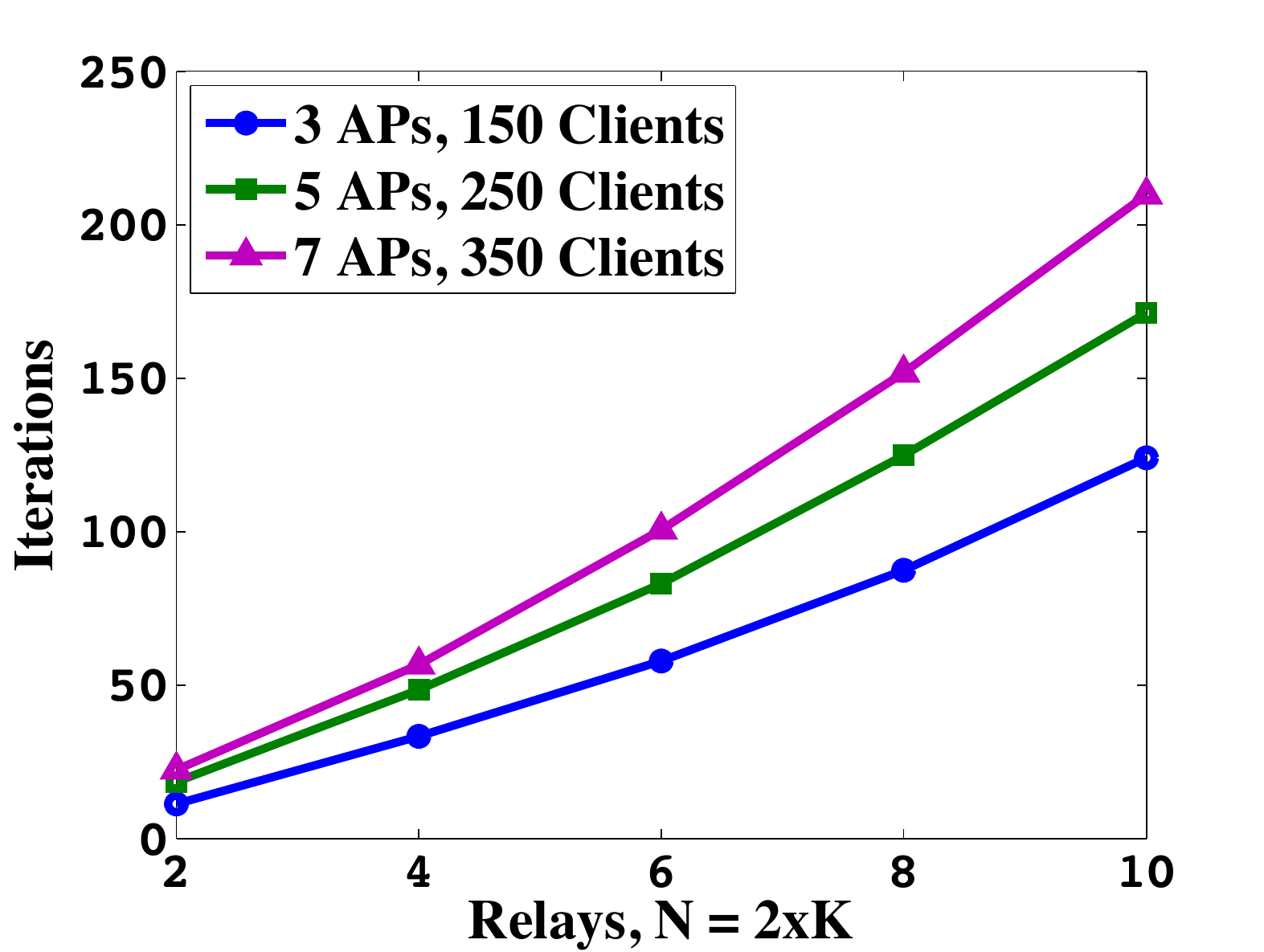}
    \label{fig.CSR}
    }
  \vspace{-0.1in}\caption{Convergence speed of the combined Algorithms~\ref{alg_d_c},~\ref{alg_d_n}, and~\ref{alg_d_r_n} as a function of the number of clients and relays. (a) Average number of iterations vs. number of clients, which varies from 1 to 5 times (x) the number of relays; (b) Average number of iterations vs. number of relays, which varies from 2 to 10 times (x) the number of APs.}
  \label{fig.CS}\vspace{-0.15in}
\end{figure}

We have implemented the distributed auction algorithms. Algorithm~\ref{alg_d_c} is executed by every client, while Algorithms~\ref{alg_d_n},~\ref{alg_d_r_n} are executed by every relay. Fig.~\ref{fig.PCR} shows the average objective value (average total network throughput across $N_r$ experiments) of problem \eqref{multidim-assign-problem}, obtained by the combined Algorithms~\ref{alg_d_c},~\ref{alg_d_n}, and~\ref{alg_d_r_n} after termination (AUCTION), in comparison to random association policy (RAND), RSSI-based policy (RSSI), and the optimal policy (OPTM). Results indicate that as the number of the supported clients (Fig.~\ref{fig.PCC_5AP}) and the APs (Fig.~\ref{fig.PCR_5AP}) varies, the distributed auction algorithms perform very close to the optimal policy. In Fig.~\ref{fig.PCR_5AP}, the clients density or the number of clients per AP is kept constant (30 clients per AP). This is consistent with Proposition~\ref{prop convergence_distributed}. RSSI-based, and we conclude that random policies give worse performance.

Next we observe the average termination behavior of the distributed auction algorithms considering different network sizes. The size is determined by the number of the APs, or the number of cells. The clients density or the number of clients per AP is also kept constant here. Fig.~\ref{fig.CS} shows the average number of iterations of our distributed algorithms while the number of clients (Fig.~\ref{fig.CSC}) and relays (Fig.~\ref{fig.CSR}) varies. Results show that there is a noticeable effect of varying number of clients ($M$) and the relays ($N$) on the termination time, thus they confirm Proposition~\ref{prop.finity-convergence}. The algorithms are faster for smaller $M$ and $N$ values.  

Fig.~\ref{fig.CSE} shows the convergence behavior of the distributed auction algorithms for different network sizes, while varying the values of $\epsilon$ used during the iterations. There is noticeable effect of $\epsilon$ on the termination time. The algorithms are faster for larger $\epsilon$, which agrees with Proposition~\ref{prop.finity-convergence}. On the other hand, Fig.~\ref{fig.PCE} indicates that the maximum distance from the optimal value of the objective function, $\Delta_{\max}$, rises while $\epsilon$ increases. This is inline with our analytical study and Proposition~\ref{prop convergence_distributed}, which indicate that the smaller the $\epsilon$, the closer to the optimal objective value the auction algorithms reach. Last but not least, $\Delta_{\max}$ is bounded by $\epsilon$, which is consistent with Proposition~\ref{prop.within-M-epsilon}.

\begin{figure}[t]\vspace{-0.2in}
\centering
   \subfloat[]{\hspace{-0.1in}
    \centering
    \includegraphics[width = 0.25\textwidth]{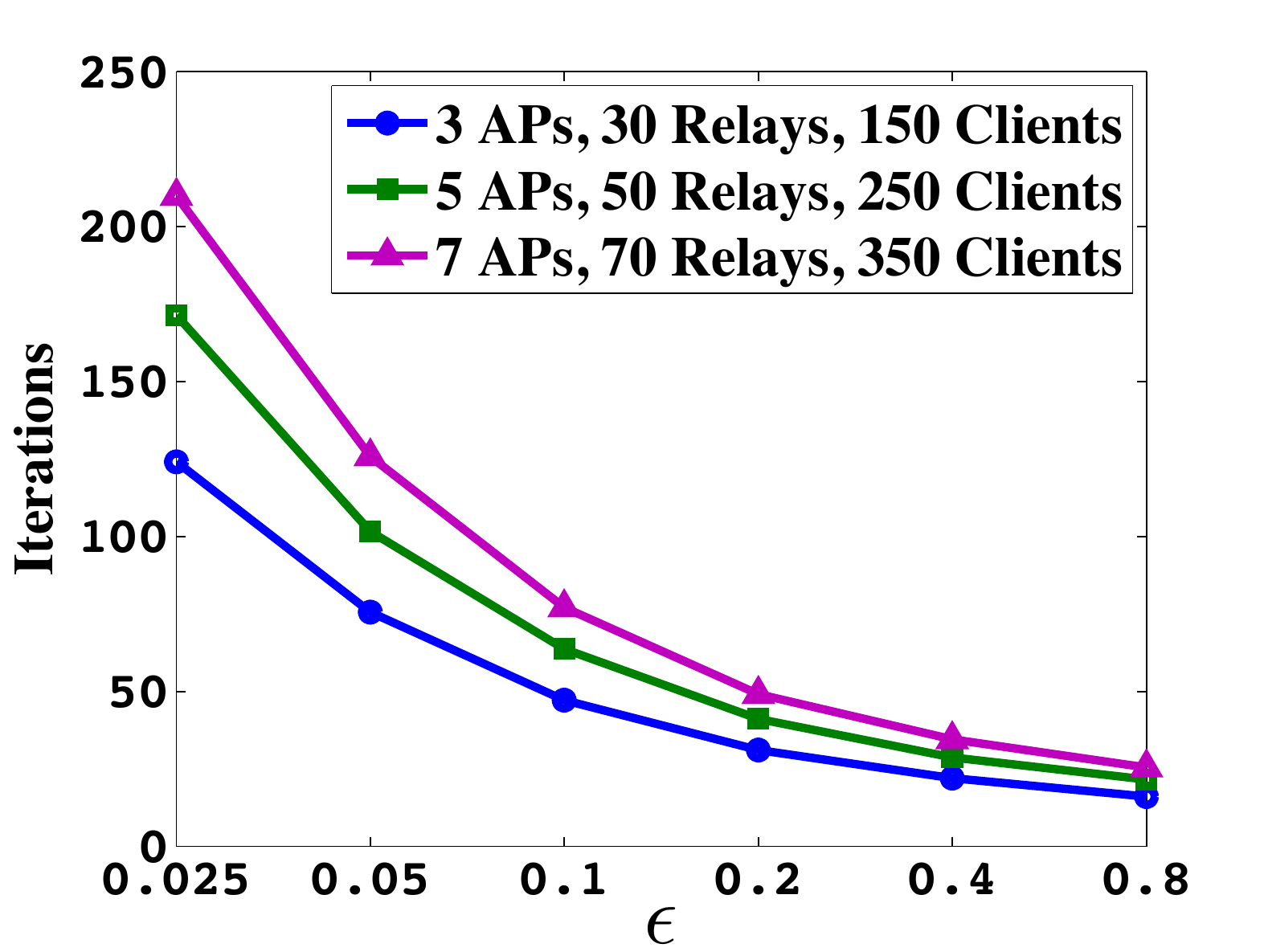}
    \label{fig.CSE}
    }
    \subfloat[]{
    \centering
  \includegraphics[width = 0.25\textwidth]{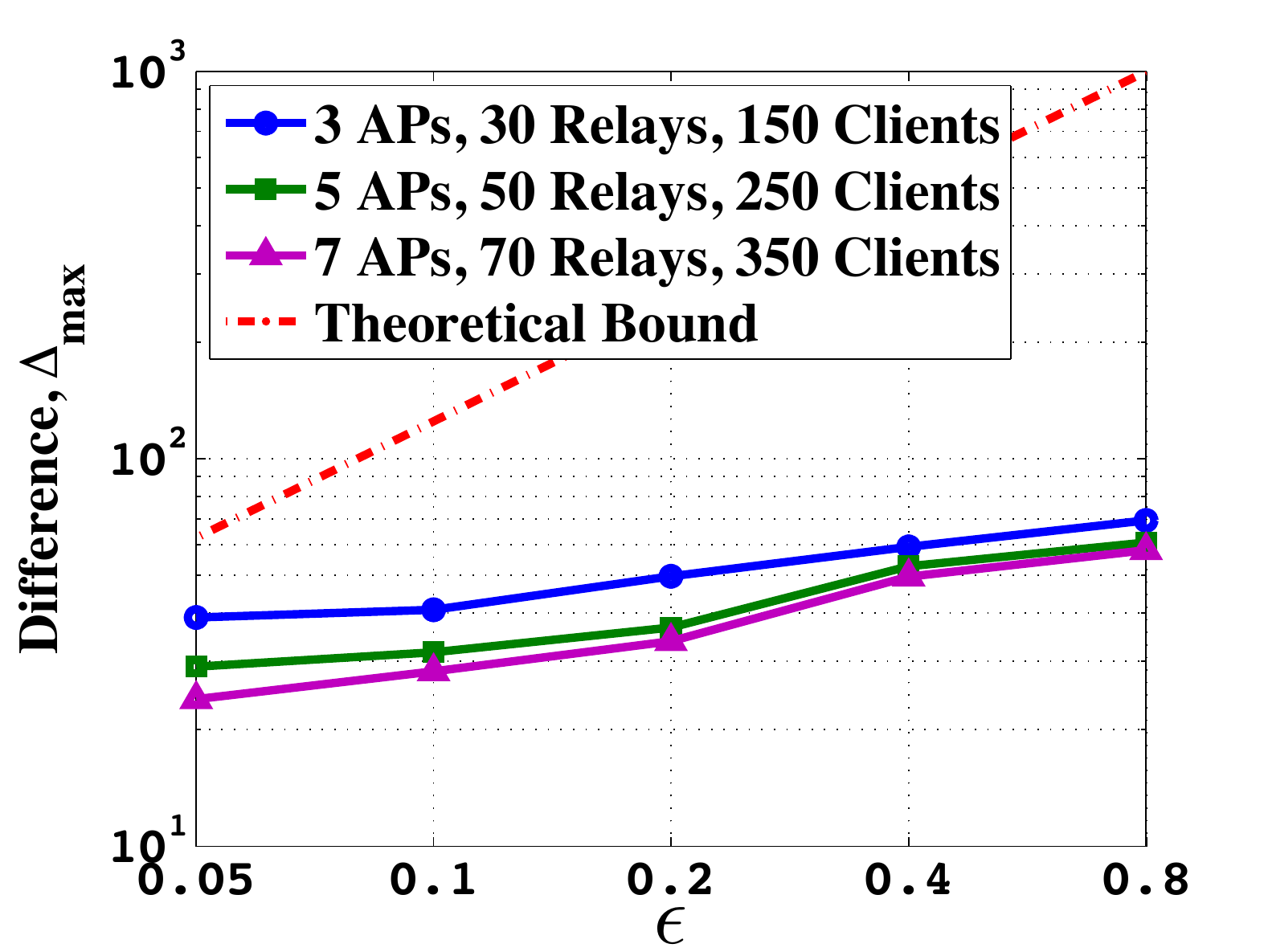}
  \label{fig.PCE}
  }
  \vspace{-0.1in}\caption{Convergence and maximum distance from the optimal objective value of problem \eqref{multidim-assign-problem}, $u$, when $\epsilon$ varies. (a) Number of iterations vs. $\epsilon$; (b) $\Delta_{\max}$ vs. $\epsilon$.}
  \label{fig.PC}\vspace{-0.15in}
  \end{figure}
  
Fig.~\ref{fig.DYN} depicts the performance of the auction algorithms in the dynamic case where 10 new clients and 5 new  relays join the network, at time slots 220 and 400 respectively. It is observed that the combined dynamic auction Algorithms \ref{alg_d_c} and  \ref{alg_d_r_n} converge faster, than  Algorithms \ref{alg_d_c} and  \ref{alg_d_n}, to average objective values close to optimal, ensuring the scalability and the stability of the proposed association and relaying approach. The results in~\ref{fig.DYN} agree with Proposition~\ref{prop.within-M-epsilon}.

Finally, we provide a statistical description of the speed of the auction algorithms in comparison to the centralized branch-and-bound algorithm that is applied by \texttt{bintprog} in Matlab. Therefore, we consider empirical cumulative distribution function (CDF) plots for the two approaches. Specifically, for each time slot, we store the total CPU time required for the auction algorithms, Algorithm 1 and 2, to terminate. We also use the total CPU time required by \texttt{bintprog} to find the \emph{optimal} solution to problem~\eqref{multidim-assign-problem}.  Results in Fig.~\ref{fig.TCC_CDF2C} show that auction algorithms are much quicker especially in cases with high load (large number of clients), where branch-and-bound algorithm suffers from high complexity. Fig.~\ref{fig.TCR_5AP250Client} depicts the average time required by these two approaches to find the optimal solution. Our auction algorithms outperform branch-and-bound algorithm as we vary the number of the supported relays in the network. 

\section{Conclusions}\vspace{-0.05in}
\label{sec.conclusion}
In this paper, we considered the problem of optimizing the allocation of the clients to APs and relays in mmW wireless access networks. The objective was to \emph{maximize the total throughput} that the clients get in the network. The resulting multi-assignment problem is combinatorial and non-convex. Thus, new distributed action algorithms were proposed to solve the problem considering both static and dynamic mmW networks. The performance of the proposed algorithms was studied and verified in comparison to standard approaches through theoretical and numerical analysis. Our results indicate that the proposed solutions could be well applied in the forthcoming mmW wireless access networks.

 \begin{figure}[t]\vspace{-0.15in}
\centering
  \includegraphics[width = 0.25\textwidth]{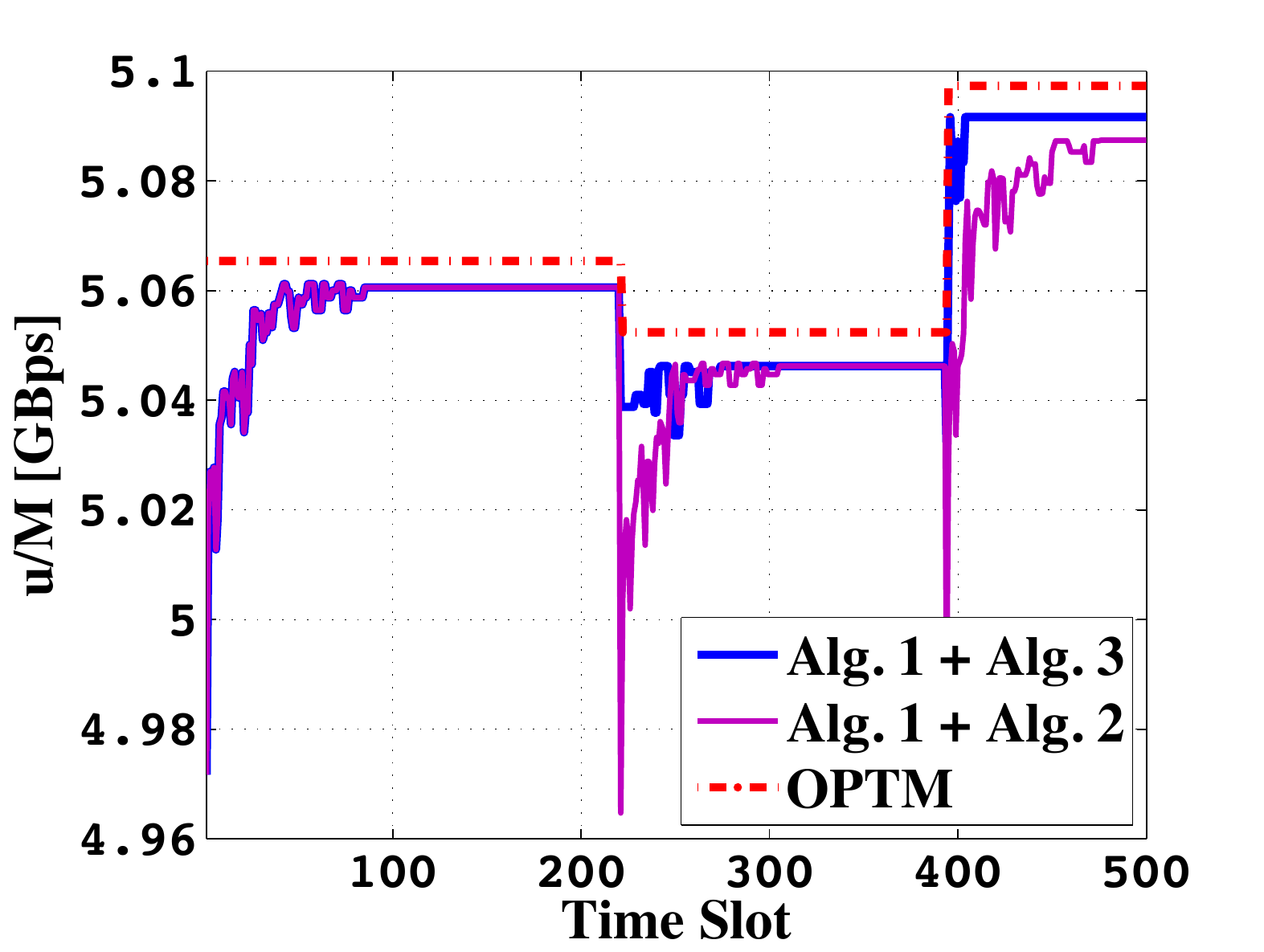}\vspace{-0.15in}
 \caption{The performance of the proposed algorithms. 10 clients and 5 relays join the mmWave network, in which there are 50 clients, 25 relays and 5 APs already. }\label{fig.DYN}\vspace{-0.15in}
\end{figure}

 \begin{figure}[t]\vspace{-0.1in}
  \centering
  \subfloat[]{\hspace{-0.1in}
    \centering
    \includegraphics[width = 0.25\textwidth]{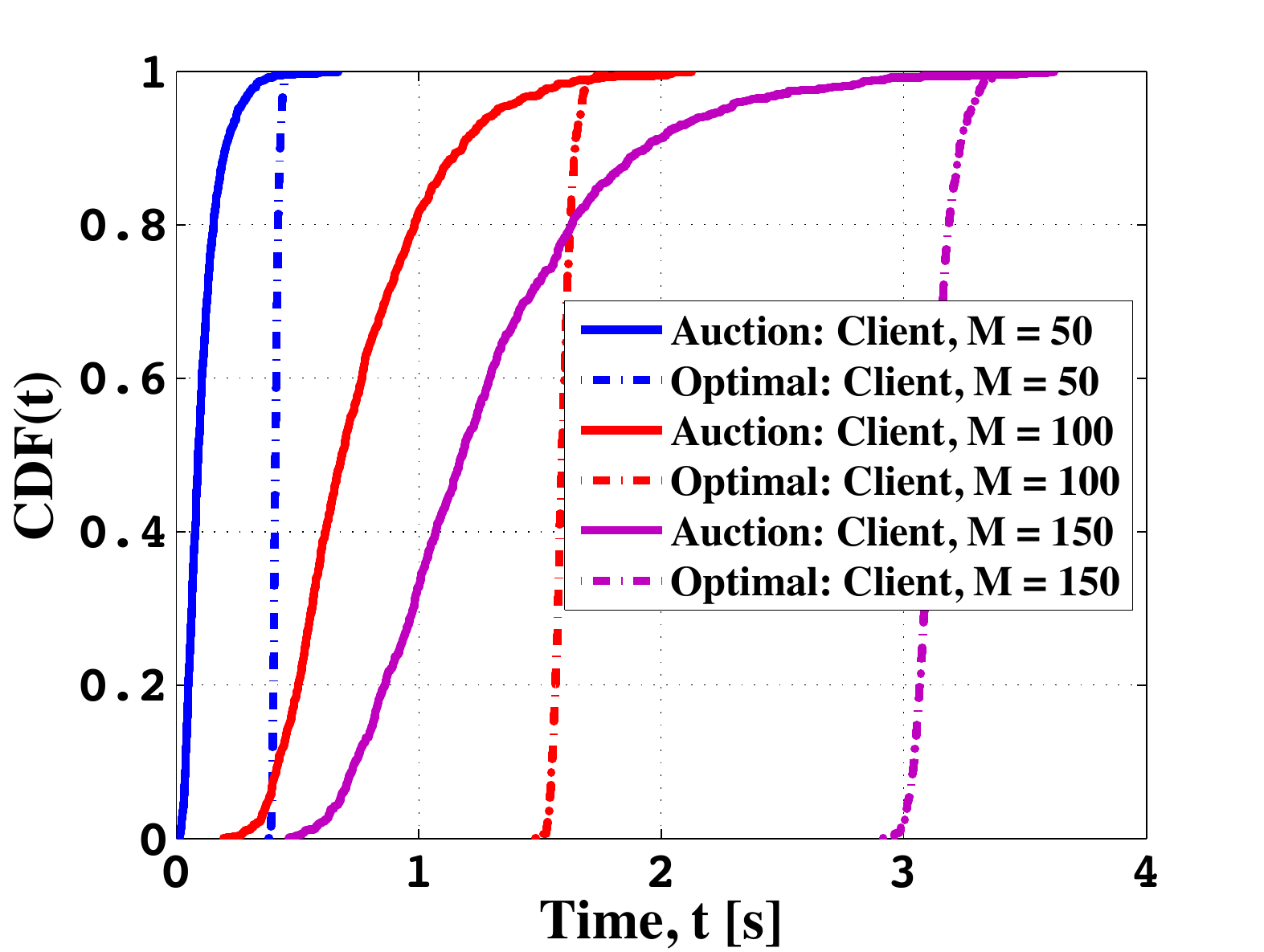}
    \label{fig.TCC_CDF2C}
    }
  \subfloat[]{
    \centering
    \includegraphics[width = 0.25\textwidth]{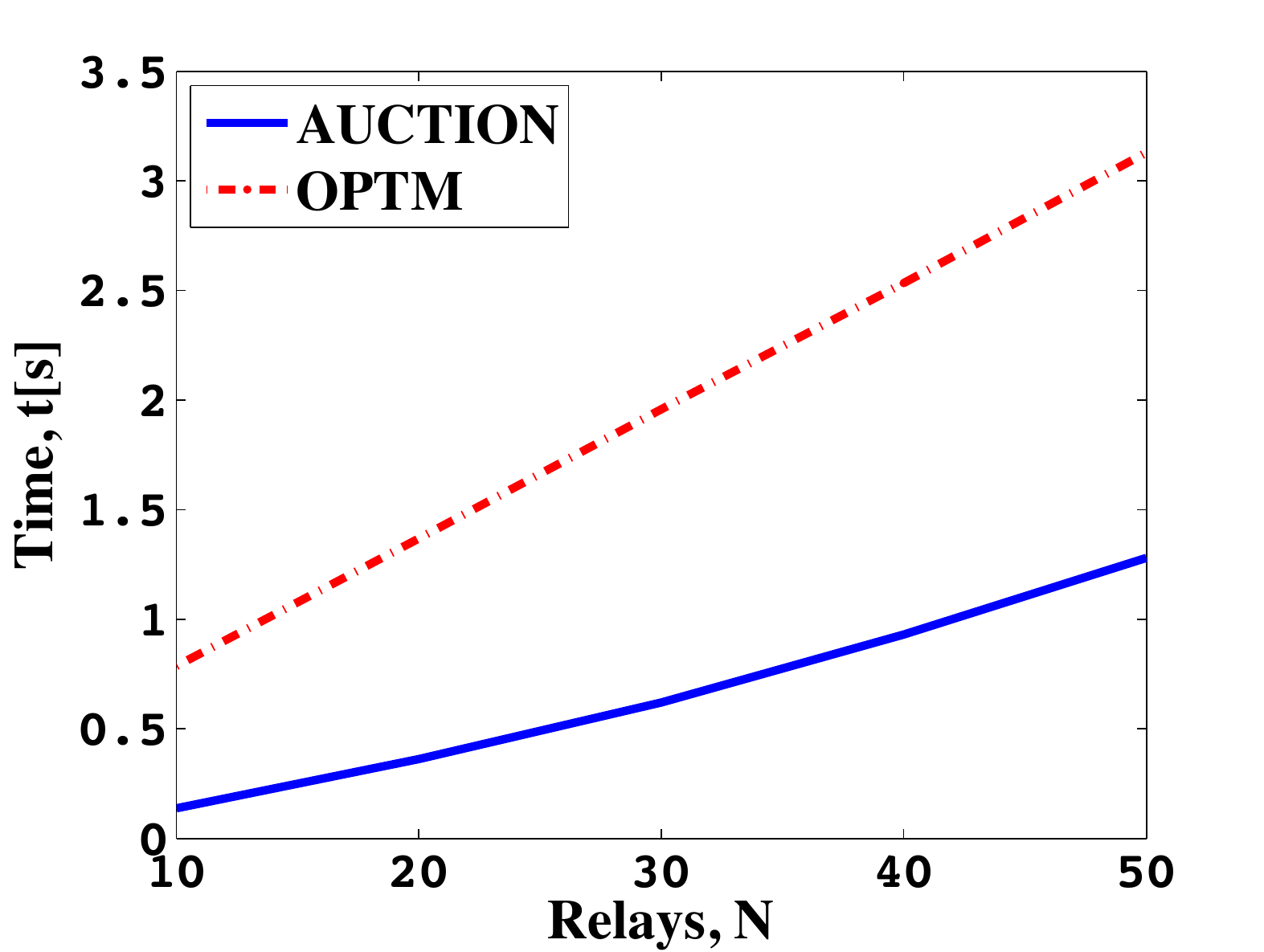}
    \label{fig.TCR_5AP250Client}
    }
  \vspace{-0.1in}\caption{Execution time for Algorithms~\ref{alg_d_c},~\ref{alg_d_n}, and~\ref{alg_d_r_n}. (a) Empirical CDF plots of total CPU time with 10 APs and 25 relays; (b) Average time to converge or find the optimal solution with 10 APs and 150 clients.}
  \label{fig.TC}\vspace{-0.15in}
\end{figure}
 \bibliographystyle{IEEEtran}
 \bibliography{ref}

\end{document}